\renewcommand\footnotetextcopyrightpermission[1]{} 
\newcolumntype{L}[1]{>{\raggedright\let\newline\\\arraybackslash\hspace{0pt}}m{#1}}
\newcolumntype{C}[1]{>{\centering\let\newline\\\arraybackslash\hspace{0pt}}m{#1}}
\newcolumntype{R}[1]{>{\raggedleft\let\newline\\\arraybackslash\hspace{0pt}}m{#1}}
\newtheorem{theorem}{Theorem}
\newcommand{\ccomment}[1]{}
\newcommand\name{DeepRT}
\begin{document}

\title{\name: A Soft Real Time Scheduler for Computer Vision Applications on the Edge}  

%
%
%

\ccomment{
\author{
    \IEEEauthorblockN{Zhe Yang, Klara Nahrstedt, Hongpeng Guo}
   \\\ \IEEEauthorblockA{University of Illinois at Urbana-Champaign  
    \\\{zheyang3, klara, hg5\}@illinois.edu}
}}
   \author{Zhe Yang, Klara Nahrstedt, Hongpeng Guo, Qian Zhou} 
    \affiliation{%
    \institution{Department of Computer Science}
      \country{University of Illinois at Urbana-Champaign}
    }
    \email{{zheyang3, klara, hg5, qianz}@illinois.edu}

\begin{abstract}
The ubiquity of smartphone cameras and IoT cameras, together with the recent boom of deep learning and deep neural networks, proliferate various computer vision driven mobile and IoT applications deployed on the edge. 
This paper focuses on applications which make soft real time requests to perform inference on their data -- they desire prompt responses within designated deadlines, but occasional deadline misses are acceptable. Supporting soft real time applications on a multi-tenant edge server is not easy, since the  requests  sharing the limited GPU computing resources of an edge server  interfere with each other. In order to 
tackle this problem,  we comprehensively evaluate how latency and throughput respond to different GPU execution plans. Based on this analysis, we propose a GPU scheduler, {\name}, which provides latency guarantee to the requests while maintaining high overall system throughput.  The key component of {\name}, DisBatcher, batches data from different requests as much as possible while it is proven to provide latency guarantee for requests admitted by an  Admission Control Module. {\name} also includes an Adaptation Module which tackles overruns. Our evaluation results show that {\name} outperforms state-of-the-art works in terms of the number of deadline misses and throughput.
\end{abstract}
\vspace{-1mm}

\maketitle
\thispagestyle{fancy}

\section{Introduction}
\label{sec:introduction}

\ccomment{
TODO \name is a layer between xxx and GPU?

x TODO say CNN.

x TODO keep in mind that latency = execution time + queuing delay

TODO mention execution model?

TODO if job instance execution time > period or deadline, reject

TODO the right question is not how many to batch, the right question when to batch

x TODO mention the difference between cloud inference and edge inference. Cloud inference focuses on throughput of a lot of requests. Edge inference focuses on reducing latency for a limited number of requests.

x TODO mention that ave latency of concurrent 2 latency higher than sequential 2

TODO how to call deadline pass part
}

The ubiquitous smartphones and Internet of Things (IoT) platforms, such as smart home solutions \cite{smartthings} and modern scientific experiment frameworks \cite{nguyen20174ceed}, produce a tremendous amount of data every day, especially video data. Meanwhile, we are also witnessing a rapid development of deep neural networks, especially Convolutional neural networks (CNNs),  and their hardware accelerators which empower fast and large-scale neural networks training and inference. These two trends proliferate a wide variety of computer vision applications across self-driving cars \cite{bojarski2016end}, mobile augmented reality \cite{jain2015overlay}\cite{liu2018edge}, mobile adaptive video streaming \cite{yeo2018neural}, and vehicle re-identification in urban surveillance \cite{liu2016deep}, 
to name a few. These applications benefit from CNN's excellent performance in making predictions through incorporating inference of trained CNN models into system design.

However, using the CNN models to build vision applications does not come for free. Smartphones or IoT  devices usually do not have sufficient computing or memory resources to support prompt CNN inference in place. While offloading CNN computations to cloud servers is an option \cite{crankshaw2017clipper}\cite{gujarati2017swayam}, many works propose to perform deep learning inference on the edge servers \cite{fang2019serving}\cite{hsu2019couper}\cite{zhou2019adaptive}. The reasons are twofold: (1)  Some data that need to be processed by CNN models, \emph{e.g.} images taken from smartphone cameras, contain private or proprietary information. Users are reluctant to upload them to a public cloud server for processing. (2) A lot of the aforementioned applications are sensitive to latency, and require real time CNN inference. For example,  an interactive application typically requires a response time less than $100$ milliseconds \cite{miller1968response}.    But the wide area network links between users and cloud servers exhibit the notorious issue of unbounded delay and jitter,  which could greatly undermine user experience. 

In this work, we limit our scope to handling these latency sensitive applications on the edge. Specifically, we focus on soft real time inference requests that desire real time responses but can tolerate occasional deadline misses, such as mobile augmented reality, mobile neural adaptive video streaming, and path planning in self-driving \cite{liu2020removing}. 
In order to guarantee real time services, we can certainly dedicate the deep learning accelerator on an edge server to a single application, but it's a waste of the precious accelerator resources. Supporting multitenancy and sharing the computation resources among multiple applications that have access to the edge decreases the cost of each client, but it inevitably affects latency performance of each application, as edge servers don't have unlimited resources due to space and budget limits. 


Faced with the trade-off between reducing application latency and increasing system throughput, a number of research approaches are proposed. 
In \cite{crankshaw2017clipper}, the authors propose a cloud based prediction serving system. It employs adaptive batching to maximize throughput while trying to reach a query latency target. However, cloud based solutions cannot be directly migrated to the edge paradigm. The reasons are twofold. First, cloud based solutions assume that the resources are abundant. Second, cloud servers aim to provide services for plenty of users so they usually set throughput as their primary goal, whereas clients seeking edge based services are most concerned about low latency guarantee.
In  \cite{fang2017qos}, an adaptive batching algorithm is proposed to increase GPU utilization which both increases system throughput and decreases average latency of the tasks, but it does not target soft real time inference requests. DeepQuery \cite{fang2019serving} considers real time tasks, but its major focus is to optimize the non-real time tasks while  totally isolating real time tasks.  
As far as we know, none of the existing works propose a soft real time CNN inference scheduler for GPU on the edge. We would like to ask this question: is it possible to provide soft real time  inference services to clients of an edge server while preserving high throughput?

 \begin{figure}
  	\centering
  	\includegraphics[width=1\linewidth]{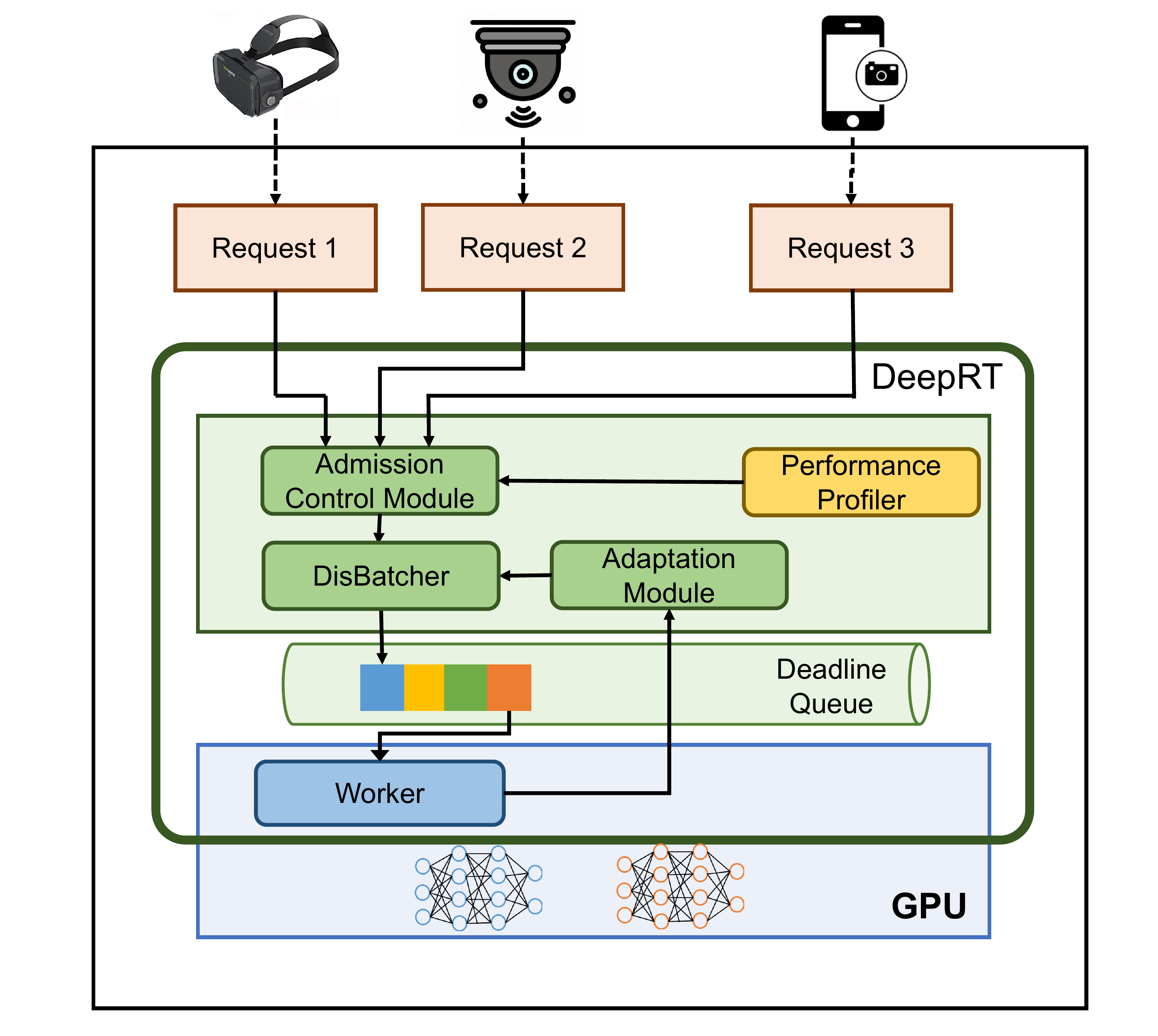}
	\caption{DeepRT system overview.}
	\label{fig:overview}
	\vspace{-5mm}
\end{figure}

Designing such a system is challenging. First, in order to guarantee soft real time services while maintaining throughput performance, it is necessary to inspect how different scheduling factors, including the number of concurrent models and batching, affect latency and throughput performance of the requests. Second, based upon the understanding of the complicated relationship between scheduling factors and system performance, and the fact that GPU operations are non-preemptive \cite{elliott2013gpusync}, how should we schedule the processing of requests to meet their deadlines while maintaining high throughput? 
Specifically, if we employ adaptive batching that batches data from different requests to increase throughput, we need to be aware that data from different requests arrive independently, and that data arriving early have to be queued to wait for  data from other requests in the same batch.
How do we determine batch size to meet deadlines of all requests? And when we obtain the batches, how to schedule their execution on GPU non-preemptively?
Third, after we determine the scheduling algorithm, we need to design an appropriate admission control test in order not to overload the system. If some tasks don't proceed as expected and cause deadline misses, we need a mechanism to tackle overruns and resolve the issue of deadline misses as fast as possible.

 In order to handle these challenges, we have done a comprehensive analysis of how inference performs under different factors, and propose a GPU inference scheduler, \name, which is able to provide soft real time inference services for multiple requests made from edge clients. Each request is to perform inference with a client-specified CNN model on a video consisting of a series of video frames which arrive at the system periodically.
 Specifically, we first use an edge  inference platform developed by NVIDIA, Triton Inference Server \cite{triton}, to study the latency and throughput performance characteristics when there are multiple CNN model instances to be executed and when request data are batched into different batch sizes. We have two key findings: (1) Executing multiple model instances concurrently on GPU doesn't significantly improve throughput. On the contrary, it greatly increases latency and under some circumstances makes it very difficult to estimate a worst-case latency. (2) Batching is able to increase system throughput  and  outperforms concurrent model instance execution in throughput increase, but it also sacrifices inference latency.
 In light of these observations, we design the {\name} system (see Figure \ref{fig:overview}), and within {\name} we put forward a batching mechanism, DisBatcher, which is able to batch as many data in one batch from different requests as possible, and we propose to execute the inference CNN models over these batched data sequentially instead of concurrently. The ordering of execution is determined by the  Earliest Deadline First (EDF) algorithm, since EDF is optimal in non-idling non-preemptive scheduling. 
 We also propose a two-phase Admission Control Module which determines whether new requests should be admitted, and an Adaptation Module which makes adaptation decisions in case of job overruns. We show that DisBatcher guarantees real time processing of all  requests admitted by the Admission Control Module. Our {\name} design also makes it easy to support non-real time requests, by batching them with the DisBatcher and assigning the batched data from non-real time requests with a low  priority. 

Overall, this paper makes the following contributions:

\begin{itemize}
    \item We perform a systematic analysis of latency and throughput performance of CNN inference under multitenancy situation.
    \item We propose a complete set of solutions of a soft real time CNN inference scheduler for GPU on the edge, including (1) a Performance Profiler, (2) a two-phase Admission Control Module , (3) a DisBatcher mechanism which batches image frames from multiple requests, (4) using EDF to schedule the batched jobs sequentially, and (5) an Adaptation Module. As far as we know, we are the first to design a soft real time CNN inference system over GPU on the edge.
    \item We  conduct comprehensive experiments to validate the performance of this system.
\end{itemize}

In this work we assume that each  request is a video. 
With minor modifications our system can also support the processing of other types of data on GPU, \emph{e.g.}, IoT sensory data. 
\ccomment{
\textcolor{red}{We focus on building a GPU scheduler, and leave the design of a scheduler involving data communication to future work.}}
It is also worth mentioning that  we target building a GPU scheduler for edge servers and the GPU uses the CUDA framework. This is the most common hardware setting in edge based inference systems. Support for other accelerators such as FPGA and TPU and for other frameworks such as OpenGL are left to future work. 

\ccomment{
The rest of the paper is organized as follows.  In Section \ref{sec:char} we study the characteristics of CNN inference under different scheduling factors. Afterwards, in Section \ref{sec:model}, we present how scheduling is performed in \name.  Section \ref{sec:design} describes {\name} system design. We discuss implementation details in Section \ref{sec:impl}.  Section \ref{sec:eval} presents our evaluation results. In Section \ref{sec:relatedwork}, we introduce state-of-the-art works on similar topics.  Section \ref{sec:conc} concludes the paper.
}
  
\vspace{-2mm}
\section{GPU Execution Characteristics}
\label{sec:char}

\begin{figure*}
  \centering
  \begin{subfigure}{.28\textwidth}
  \centering
  \vspace{-12pt}
  \includegraphics[width = 1\linewidth]{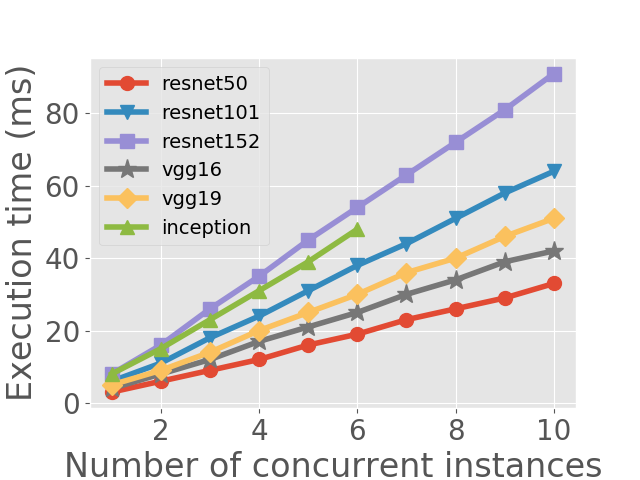}
  \caption{Median execution time when executing multiple instances of the same model.}
  \label{fig:same-latency}
  \end{subfigure}
  \hspace{15pt}
  \begin{subfigure}{.28\textwidth}
  \centering
  \vspace{-12pt}
  \includegraphics[width = 1\linewidth]{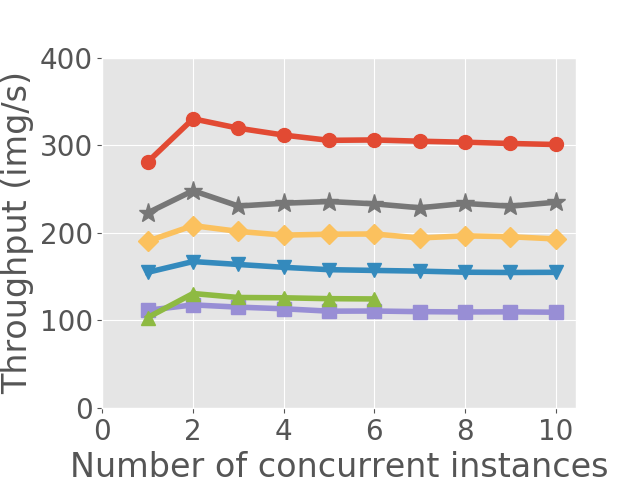}
  \caption{Overall throughput when executing multiple instances of the same model.}
  \label{fig:same-thru}
  \end{subfigure}
  \hspace{15pt}
  \begin{subfigure}{.28\textwidth}
  \centering
  \vspace{-12pt}
  \includegraphics[width = 1\linewidth]{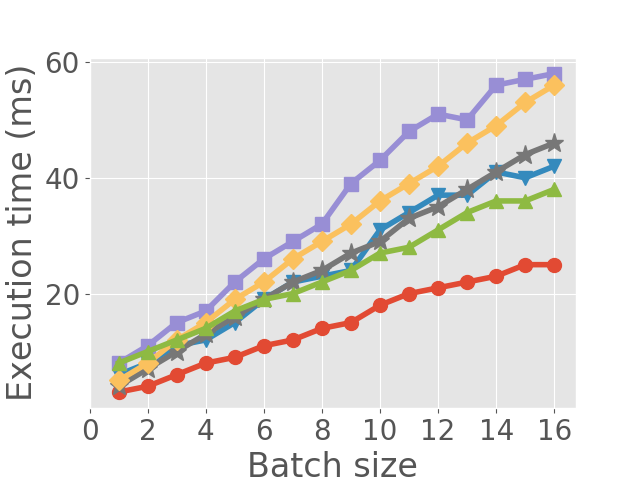}
  \caption{Median execution time when processing data in batches.}
  \label{fig:batch-latency}
  \end{subfigure}
    \begin{subfigure}{.28\textwidth}
    \centering
    \vspace{-3pt}
  \includegraphics[width = 1\linewidth]{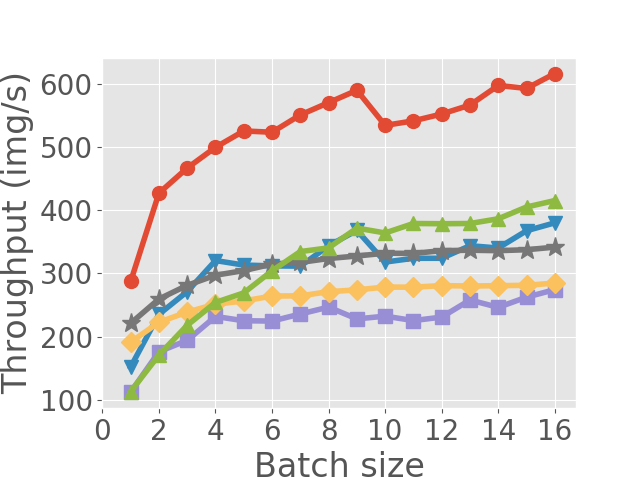}
  \caption{Overall throughput when processing data in batches.}
  \label{fig:batch-thru}
  \end{subfigure}
  \hspace{15pt}
   \begin{subfigure}{.28\textwidth}
   \centering
   \vspace{-3pt}
  \includegraphics[width = 1\linewidth]{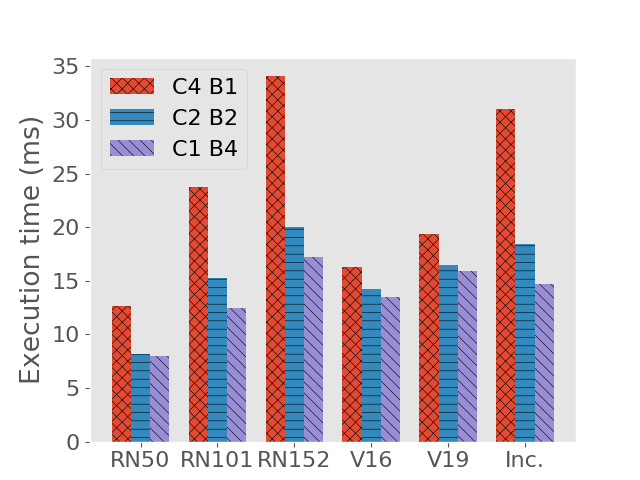}
  \caption{Comparison of execution time between concurrent execution and batch processing. }
  \label{fig:comp-latency}
  \end{subfigure}
  \hspace{15pt}
     \begin{subfigure}{.28\textwidth}
     \centering
     \vspace{-3pt}
  \includegraphics[width = 1\linewidth]{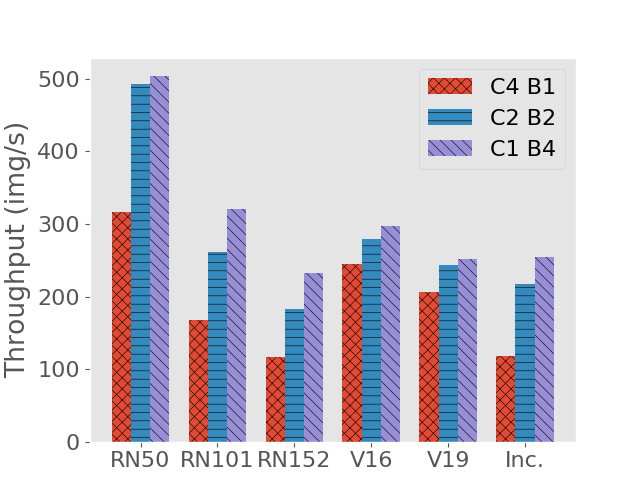}
  \caption{Comparison of throughput between concurrent execution and batch processing.}
  \label{fig:comp-thru}
  \end{subfigure}
  \vspace{2pt}
  \caption{These figures show the execution time and throughput performance under different concurrency and batch size conditions. In (e)  and (f), ``Cx By'' means running x model instances concurrently while each instance processing batches sized y. ``RN'' refers to ResNet, ``V'' refers to VGG, ``Inc.'' refers to Inception.}\label{fig:sec2}
  \vspace{-15pt}
\end{figure*}

While there has been a tremendous amount of research endeavor and industrial solutions aiming at  optimizing the processing latency and throughput of a single deep learning request on GPU, \emph{e.g.}, model pruning, fewer works focus on the handling of multiple concurrent requests, which is in fact an important  scenario in edge computing. In order to design an edge based system aiming at providing soft real time inference services for multiple clients, first we need to understand how GPU behaves under multiple requests. In this section, we show the performance characteristics of processing multiple deep learning requests on GPU, which is the foundation of our soft real time GPU scheduler. 

\subsection{Experimental Settings for the Analyses}

We mainly focus on the latency 
and throughput performance of CNN inference.  When batching is enabled, the  latency $l$ of performing CNN inference upon some input data consists of two parts:
\begin{equation}
\vspace{-2pt}
    l = l_q + l_e,
    \vspace{-2pt}
\label{eq:latency}
\end{equation}
where $l_e$ is the real execution time of performing CNN inference upon input data on GPU, and $l_q$ is the queuing time spent by some input data that arrive early in waiting for input data that arrive late. $l_q$ hinges on the specific design of the inference system, so throughput this section we only measure $l_e$ and call it \emph{execution time} to avoid ambiguity.

All the performance measurements are carried out using a mature cloud and edge inference solution, Triton Inference Server \cite{triton},  developed by NVIDIA. The hardware setting is introduced in Section \ref{sec:impl}. The way we measure execution time and throughput is as follows.
We use Triton's \texttt{perf\_analyzer} to send  requests to Triton server and record median execution time and throughput. 
Each time we send one or several requests to the server for inference, depending on the concurrency number, and each request may contain one image or a batch of multiple images; when  inference result is received from the server, we immediately send out another request(s). This process is repeated over a fixed time interval, $20$ seconds. 
We use images sized $3\times 224\times 224$ (RGB channels $\times$ height $\times$ width), which is the default image size in Triton. We choose $6$ widely used deep learning models -- ResNet50, ResNet101, ResNet152, VGG16, VGG19, and Inception-v3. They belong to $3$ types, ResNet \cite{he2016deep}, VGGNet \cite{simonyan2014very}, and Inception-v3 \cite{szegedy2016rethinking}. In our setting, ResNet and VGGNet models are built upon the ONNX framework \cite{bai2019}, and Inception-v3 is of the GraphDef format \cite{tensorflow}, which is a tool used by Tensorflow to represent models. Our setting covers different types of models, different model sizes in each type, and different frameworks to show the universality of our conclusions.

\vspace{-5pt}
\subsection{Concurrent Execution of Models}

We first study the performance characteristics when executing multiple models concurrently on GPU. To be more specific, there are multiple model instances loaded on GPU, and these models process image frames received from clients all at the same time; we study how concurrency affects the execution of each model. This analysis can be further divided into two parts: concurrent execution of multiple instances of the same model on GPU, and concurrent execution of different models.

\textbf{Concurrent execution of the same model.} When different clients request to process their data with the same model, except for batching the data and processing them as a whole in one model instance, another common approach taken by Triton and some  platforms is to replicate the same model on GPU to get several model instances, and to process each request with one of the model instances concurrently. In this part we study the performance of running different number of duplicate instances of the same model concurrently (see Figure \ref{fig:same-latency} and  \ref{fig:same-thru}).
For Inception we show its performance  up to a concurrency number of $6$, as more concurrent instances overload the system. 
 We have also recorded the variance of different executions, but  the variances are too small to display.

\ccomment{
\begin{figure}
  	\centering
  	\includegraphics[width=1\linewidth]{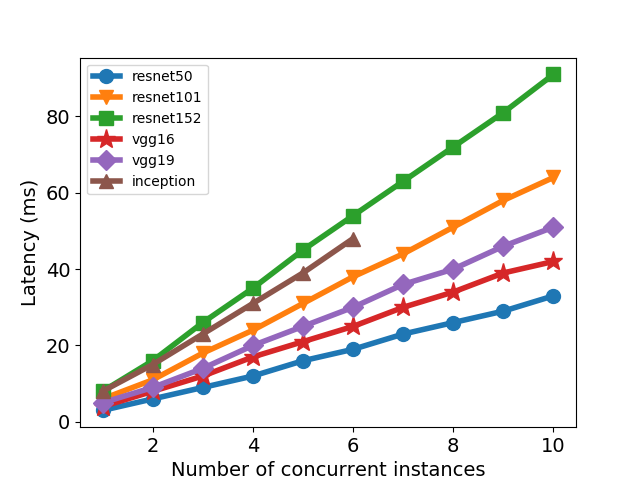}
	\vspace{1pt}
	\caption{.}
	\label{fig:same-latency}
	\vspace{-5mm}
\end{figure}
\begin{figure}
  	\centering
  	\includegraphics[width=1\linewidth]{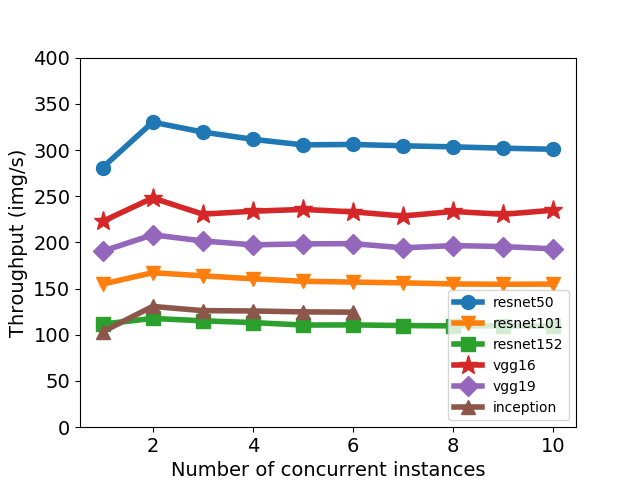}
	\vspace{1pt}
	\caption{.}
	\label{fig:same-thru}
	\vspace{-5mm}
\end{figure}}


 From Figure \ref{fig:same-latency} we can see the linear relationship between execution time and the number of concurrent instances. 
 We think this is due to how CUDA schedules multiple programs. When multiple programs run concurrently on GPU, CUDA schedules their warps (a warp contains multiple CUDA threads which can be executed in parallel) with a time sliced scheduler \cite{wang2017quality}.  Warps from different contexts cannot be executed simultaneously and some warps have to wait in queue for their time slice share. When we execute multiple instances of the same model, since all instances have exactly the same warps but the warps are in different contexts, the execution time of each warp 
 grows linearly with the concurrency number.
 Figure \ref{fig:same-thru} shows how throughput changes with the concurrency number. We can see that if we increase concurrency number to $2$, there is a slight increase in throughput, but afterwards the throughput stays at a stable value.

 A crucial conclusion we draw in this part is that, although increasing concurrency to a certain level  can improve inference throughput by a small margin, execution time increases linearly as concurrency number grows. Let's look at an example to see how this observation affects scheduling algorithm design. Imagine there are two images to be processed with a specified model, say, ResNet50. Processing the two images one by one inside one single model instance instead of processing them concurrently in two model instances,  achieves a slightly lower throughput but the average latency is reduced by $25\%$, since the first image is finished early and its execution time is not affected by the concurrent execution of the second image. Therefore, the latency of the first image is reduced by half compared to running concurrently, while  the latency of the second image remains approximately the same as when running concurrently.
 
\begin{center}
\begin{table}
\footnotesize
\centering
\begin{tabular}{ c| C{16pt} C{18pt} C{19pt} C{19pt} C{18pt} C{18pt} C{18pt} }
\hline
\multicolumn{7}{c}{Execution time (ms)} \\
\hline
   & - & RN50 & RN101 & RN152 & V16 & V19 & Inc. \\ 
 \hline
 RN50 & 3.5 {\scriptsize(0.0)} & 6.4 {\scriptsize(0.0)} & 6.9 {\scriptsize(0.6)} & 7.1 {\scriptsize(0.4)} & 11.2 {\scriptsize(0.0)} & 11.6 {\scriptsize(0.6)} & 5.8 {\scriptsize(0.8)} \\  
 \hline
 RN101 & 6.4 {\scriptsize(0.0)} & 11.1 {\scriptsize(0.4)} & 11.8 {\scriptsize(0.0)} & 12.0 {\scriptsize(0.5)} & 18.0 {\scriptsize(0.1)} & 20.3 {\scriptsize(0.1)} & 9.0 {\scriptsize(1.3)} \\
 \hline
 RN152 & 9.0 {\scriptsize(0.1)} & 15.5 {\scriptsize(0.3)} & 16.4 {\scriptsize(0.5)} & 16.8 {\scriptsize(0.1)} & 24.6 {\scriptsize(0.1)} & 27.6 {\scriptsize(0.1)} & 14.6 {\scriptsize(0.4)}\\
 \hline
 V16 & 4.5 {\scriptsize(0.0)} & 5.9 {\scriptsize(0.2)} &	6.0 {\scriptsize(0.4)} &	6.3 {\scriptsize(0.2)} &	8.1 {\scriptsize(0.0)} &	8.8 {\scriptsize(0.0)} &	5.2 {\scriptsize(0.5)}\\
 \hline
 V19 & 5.3 {\scriptsize(0.1)} & 6.9 {\scriptsize(0.4)} &	7.0 {\scriptsize(0.4)} &	7.4 {\scriptsize(0.2)} &	8.8 {\scriptsize(0.2)} &	9.6 {\scriptsize(0.0)} &	6.1 {\scriptsize(0.5)}\\
 \hline
 Inc & 9.3 {\scriptsize(1.4)} & 25.3 {\scriptsize(0.6)} & 29.0 {\scriptsize(0.5)} &	28.9 {\scriptsize(0.4)} &	37.6 {\scriptsize(1.3)} &	42.9 {\scriptsize(0.8)} &	15.2 {\scriptsize(0.5)}\\
 \hline

\end{tabular}

\vspace{6pt}

\begin{tabular}{ c|C{18pt} C{20pt} C{24pt} C{24pt} C{18pt} C{18pt} C{18pt} }
\hline
\multicolumn{7}{c}{Throughput (img/s)} \\
\hline
   & - & RN50 & RN101 & RN152 & V16 & V19 & Inc. \\ 
 \hline
 RN50 & 282.1 & 155.8 &	145.4 &	143.4 &	89.2 &	87.1 &	174.7  \\ 
 \hline
 RN101 & 154.9 & 90.4 &	84.7 &	82.1 &	55.6 &	51.0 &	103.1 \\
 \hline
 RN152 & 111.6 & 64.2 &	60.9 &	59.6 &	40.7 &	36.8 &	69.1\\
 \hline
 V16 & 222.8 & 178.0 &	166.8 &	162.7 &	123.4 &	113.1 &	187.4\\
 \hline
 V19 & 190.5 & 148.3 &	144.4 &	142.9 &	113.1 &	103.8 &	162.1\\
 \hline
 Inc & 105.9 & 39.7 &	34.4 &	34.5 &	26.6 &	23.6 &	65.5\\
 \hline

\end{tabular}
\vspace{5pt}
\caption{Execution time and throughput when running two different model instances concurrently.}
\label{tb:sec2-diff}
\vspace{-15pt}
\end{table}
\end{center}

\vspace{-15pt}
\textbf{Concurrent execution of different models.} A common practice of Triton Inference server and other solutions when different clients send requests to process data with different models is to execute the models simultaneously. This part analyzes the performance characteristics when different models run concurrently on GPU. Among the $6$ models introduced previously, in each experimental run we choose two models and execute them concurrently\footnote{In this analysis, we study how the execution of a model is affected by another model instance. We don't analyze the situations where there are three or more instances since the current setting already shows the complicated interference between model executions.}. The other experimental setups are the same as in the previous part.

In Table \ref{tb:sec2-diff}, 
we show the  median execution time and throughput when executing a model specified by the leftmost column concurrently with another model specified by the uppermost row. As a comparison, we use the columns marked by ``-'' to  show the performance when a model is executed alone. The data show that when a model M is executed along with different other models, its execution time and throughput performance vary greatly\footnote{There is an observable trend, however, that when a model is executed concurrently with models of the same type (\emph{e.g.}, a ResNet101 model and a ResNet152 model), the performance tend to be similar, but performance discrepancy still exists. This trend is a piece of evidence for our following hypothesis. The kernels of models belonging to the same type have similar sizes, thus they cause similar amount of interference.}.  Explaining such an observation  requires a detailed study of how internal scheduling happens at a low level on GPU, which is very difficult since GPU drivers are not open sourced. Our hypothesis  is that different kernel (a CUDA function) sizes of different models cause different slowdowns. CUDA uses a time sliced scheduler to schedule programs from different contexts (corresponding to different models) and the kernels are the smallest scheduling unit. They get scheduled non-preemptively on GPU. Different models are composed of distinct numbers and types of kernels; when a model M is executed concurrently with model N and the kernels of M are larger in sizes but smaller in quantity, model M will have more GPU time share and thus exhibit higher throughput and smaller execution time.

\ccomment{
\begin{figure}
  	\centering
  	\includegraphics[width=1\linewidth]{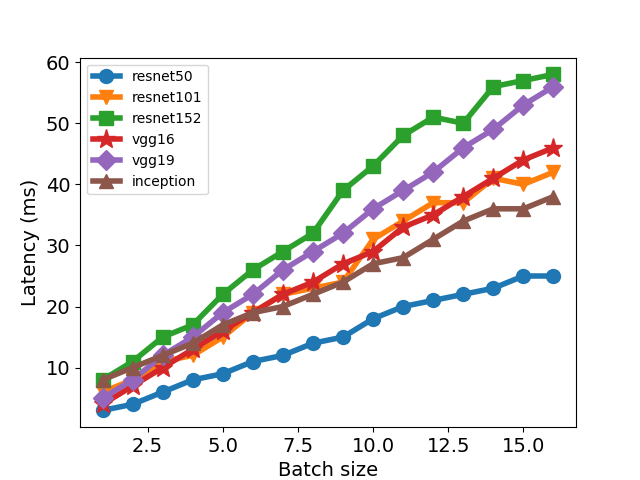}
	\vspace{1pt}
	\caption{.}
	\label{fig:batch-latency}
	\vspace{-5mm}
\end{figure}
\begin{figure}
  	\centering
  	\includegraphics[width=1\linewidth]{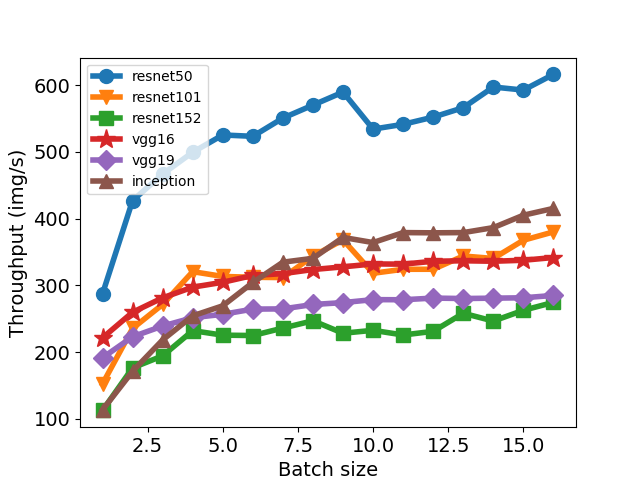}
	\vspace{1pt}
	\caption{.}
	\label{fig:batch-thru}
	\vspace{-5mm}
\end{figure}}

The observation we make from this part, is that concurrently executing multiple instances of different models results in complex interference between models. If we want to design a real time system, we need to figure out the interference in any subset of all admitted models in order to obtain their worst-case execution times. When there are $z$ models, there are  ${z\choose 2}$ combinations of models to profile if we only execute two model instances concurrently. In fact, if we allow any number of concurrent model instances, we need to profile $\sum_{k=1}^z {z\choose k} = 2^z$ combinations of models, which is practically impossible. However, if we manage to execute the different model instances sequentially instead of concurrently, there won't be any interference between the sequential model executions.

\vspace{-5pt}
\subsection{Inference in Batches.}

A well-known conclusion about GPU-based deep learning model training and inference is that batch processing can boost their throughput. In this part we show the performance characteristics when we do inference in batches. The experimental setup of this part is the same as in the previous two parts, except that each time we only execute one model instance  instead of concurrently executing two or more instances. The comparison between different batch sizes is shown in Figure \ref{fig:batch-latency} and Figure \ref{fig:batch-thru}. As expected, batching increases throughput with a cost of higher execution time. 

\vspace{-5pt}
\subsection{Concurrent Execution vs Batch Processing}

\ccomment{
\begin{figure}
  	\centering
  	\includegraphics[width=1\linewidth]{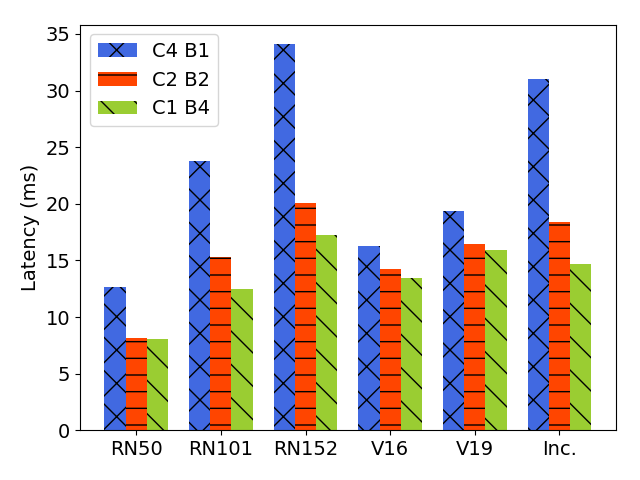}
	\vspace{1pt}
	\caption{.}
	\label{fig:comp-latency}
	\vspace{-5mm}
\end{figure}
\begin{figure}
  	\centering
  	\includegraphics[width=1\linewidth]{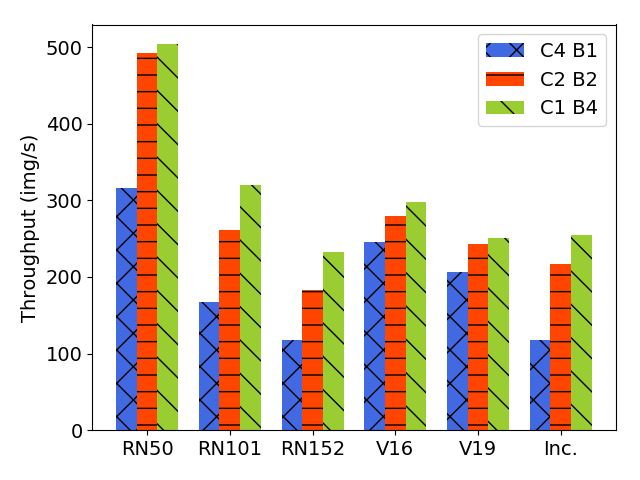}
	\vspace{1pt}
	\caption{.}
	\label{fig:comp-thru}
	\vspace{-5mm}
\end{figure}}

Since concurrent execution of multiple instances and batching inputs into larger tensors can both process multiple requests simultaneously, in this part we compare how these two approaches affect system performance. On each of the aforementioned $6$ models, we fix the number of concurrently processed requests (concurrent model instances $\times$ batch size) and  see how the execution time and throughput vary. The result is shown in Figure \ref{fig:comp-latency} and Figure \ref{fig:comp-thru}, where we compare concurrency - batch combinations of $4\times 1$, $2\times 2$, and $1\times 4$. 
We can see that batch processing
should be favored over concurrent model instance execution in terms of increasing system throughput and reducing request execution time.

\subsection{Summary of Observations}

In summary, We make three  observations: 

\begin{itemize}
    \item Executing two or more models simultaneously does not notably improve  system throughput,  and  it   increases latency of each request due to increased execution time. 
    \item When executing  multiple instances from different models on the same GPU, there exists interference between concurrent executions, 
    making it cumbersome to profile or estimate the  worst-case execution time, hindering the design of a real time system.
    \item Processing input data in batches increases system throughput, much more than concurrent model execution does, with a sacrifice of increased request execution time.
\end{itemize}

Two key takeaways from these observations are: (1) Concurrently executing multiple model instances does not provide high value to our goal of guaranteeing maximum latency  while maintaining high throughput; instead, the interference it introduces makes designing a real time system difficult. So we would like to execute model instances sequentially instead of concurrently to avoid interference. (2) Batching  increases throughput
, but we need to make sure that the increased latency   (increased execution time plus queuing time when an image frame waits for other frames belonging to the same batch)  
does not cause deadline misses.
\vspace{-2mm}
\section{{{\name}}   Scheduling Scheme}
\label{sec:model}

\begin{figure*}
  	\centering
  	\includegraphics[width=1\linewidth]{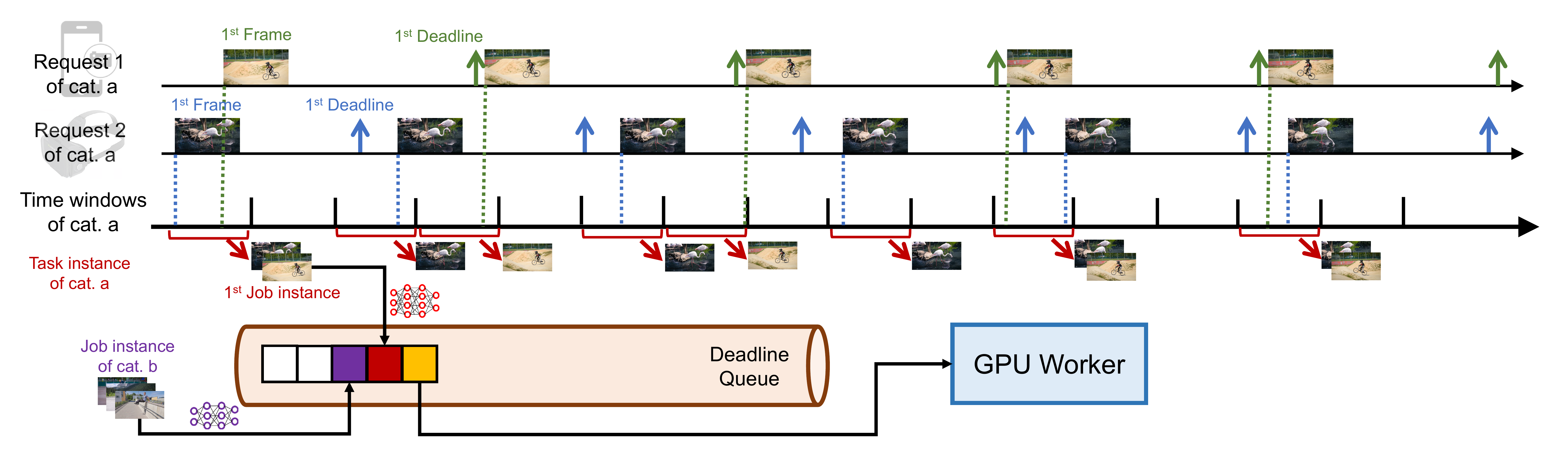}
	\vspace{-15pt}
	\caption{An illustration of our batching mechanism. 
We show how two requests of category \emph{a} are batched into a task instance. The dashed lines represent the arrival times of  frames, and the arrows pointing upward represent their deadlines. Frames of category \emph{a} are batched into a job instance if they arrive during the same time window. Then the  job instance are pushed to a deadline queue for processing.}
	\label{fig:tw}
	\vspace{-5mm}
\end{figure*}

Based upon the observations we make in the previous section,  we propose the soft real time scheduling scheme of {\name}, which is proven to meet the deadlines of  admitted requests while being able to process input data in batches in order to boost inference throughput. Specifically, we propose a batching approach called DisBatcher, which  batches input data from multiple requests according to their release times and deadlines, and pushes the batched data to a deadline queue to be processed by the GPU.  We propose to schedule the execution of the batches with EDF. Then we prove that DisBatcher ensures all deadlines of the input data are met if the batches can be scheduled with EDF.

\subsection{System Model}

\textbf{Data Model.} This paper assumes that the GPU resources  on the edge are shared by multiple users. Each user sends a request to {\name}, and each request corresponds to a video, the \emph{frames} of which need to be processed by a user-specified CNN model. 
Different users may request different CNN models.
When a request comes, if {\name} confirms that the new request and all the existing requests in the system can  be scheduled, {\name} will admit the new request\footnote{In this paper, we use admitted requests and requests interchangeably, and we call requests that wait to be verified pending requests.}.
 The video frames of an admitted request arrive at {\name} in an online manner, and the interval between two frames is determined by a frame rate, or \emph{period}\footnote{We only consider the scenario where all frames are raw,  uncompressed video frames. Decompression of compressed frames should be coordinated by a CPU scheduler and is not the scope of {\name}.  Also, we assume that the frames in a video arrive one by one, but {\name} can also handle the situation where a chunk of frames arrive at a time, used by video coding techniques like H.264. }. Each request also has a \emph{relative deadline}, indicating the desired maximum latency of performing inference on each of the video frames. The relative deadline of a request is not necessarily equal to its period. Different videos may have different input  \emph{shapes}, which equal to the number of channels (most videos use RGB channels so this number is $3$) $\times$ frame height $\times$ frame width.





\textbf{Execution Model.} The two key takeaways   from Section \ref{sec:char} drive us to avoid concurrent model execution and leverage batching to increase system throughput. We propose the following execution model for {\name}. When a frame from a request arrives, {\name} does not process this  frame immediately. Instead, {\name} waits for  frames of the same shape requiring the same CNN model, batches these  frames together, and processes the batched data on GPU. 
In fact, all  frames that are of the same shape and require the same model can be batched together regardless of which request they belong to, while  frames with different shape or require different models should not be batched because GPU cannot execute in parallel the same kernels for these frames. Since the batch of frames from multiple requests is what the GPU actually executes, we call the job of processing a batch of frames a \emph{job instance}. 
{\name}  processes multiple job instances  \emph{one at a time} instead of concurrently. In {\name}, the frames with the same shape requiring the same model are said to be of the same \emph{category}. Likewise, we call the requests containing the same category of frames to be of the same category.  Executing job instances sequentially instead of concurrently makes Equation \ref{eq:latency} become $l=l_{q_b}+l_{q_j}+l_e$, where $l_{q_b}$ denotes frame queuing time waiting for other frames in the same batch, and $l_{q_j}$ denotes queuing time of a job instances. In order to provide a guarantee on $l$, two questions need to be answered: (a) How many frames should be in each batched job instance, or how does {\name} determine which  frames should be put into the same batch? (b) GPU operations are non-preemptive, in the sense that operations already launched on GPU cannot be preempted. How to schedule the non-preemptive batched job instances?

\vspace{-10pt}
\subsection{{\name} Batching Approach}

This subsection answers the first question. There are already research efforts \cite{crankshaw2017clipper}\cite{fang2019serving} or industrial solutions \cite{triton} which process batches containing fixed or adaptive number of images from multiple requests concurrently. While these approaches are capable of increasing inference throughput and reducing latency, their approaches don't support real time services, since they don't have a deadline-centric soft real time design. A real time system with batching enabled has to make sure the queuing time of a frame arriving early waiting for frames arriving late does not cause a deadline miss of the early frame.

In order to guarantee real time processing of  frames, instead of answering the question ``what is the right number of  frames inside a batch'',  we try to answer another question ``when should a frame be put inside a batch''. We propose  the DisBatcher approach to answer this question. DisBatcher divides time into contiguous time intervals with the same length, called \emph{time windows}. The end time of a window coincides with the start time of next time window.  Frames that fall into the same time window should be batched together into one job instance regardless of which request the frames belong to, as long as the frames are of the same category, and the batching action happens at the end of a time window. DisBatcher sets the relative deadline of a job instance to be the length of one time window, which means that a job instance generated at the end of one time window, should be completed before the end of the next time window. For simplicity, We call the end time of a time window, which is also the start time of the next time window,  a time window \emph{joint}. As we previously mentioned,  frames of different categories should not be batched together. DisBatcher creates different and independent time windows for each category of frames. 
With this time window approach, we transform the processing of one category of requests into a series of non-preemptive job instances, or a \emph{task instance}. A task instance's relative deadline is equal to its period, which is the length of the  time window used to generate the task instance.
This task instance is not strictly a periodic task, since each job instance has different execution times due to different number of  frames inside the batches, and some job instances have $0$ as their execution times (no frames in this batch).   
An example of using time windows to batch  frames is shown in Figure \ref{fig:tw}.

How to  set appropriate time window lengths in order to ensure that  for all frames their latencies do not exceed their deadlines remains a challenge. DisBatcher utilizes a deadline-centric design.
For one category of  requests, DisBatcher sets the time window length to be half of the smallest relative deadline among all requests, regardless of other parameters such as request arrival periods. In fact, we have the following theorem.

\vspace{-5pt}
\theoremstyle{definition}
\begin{theorem}
Given a set of deep learning inference requests $I$. Each request $I^g_m, g\in \Gamma, m\in M_g$ consists of a series of  frames which need to be processed by GPU, where $\Gamma$ denotes the set of all request categories and $M_g$ denotes the set of all requests in category $g$. 
Each request also has a relative deadline $d^g_m$. A request is schedulable if the latency of each frame is smaller than or equal to $d^g_m$. Likewise, we call a job instance schedulable if its latency is no larger than its deadline.

If we use the time window scheme to batch process the  frames with the time window length of each category $W_g$ set to
\vspace{-2pt}
\begin{equation}
\vspace{-2pt}
    W_g = \frac{1}{2}\min_{m\in M_g} d^g_m, g\in \Gamma,
\end{equation}
all requests in $I$ are schedulable if the periodic batched job instances are schedulable.
\end{theorem}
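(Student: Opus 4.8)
The plan is to follow a single, arbitrary frame through the DisBatcher pipeline and bound its end-to-end latency by $2W_g$, which by the choice of $W_g$ is at most that frame's relative deadline. First I would fix a category $g\in\Gamma$, a request $I^g_m$ with $m\in M_g$, and an arbitrary frame of that request; call its arrival time $t$. Since DisBatcher partitions the timeline for category $g$ into consecutive windows of length $W_g$, there is a unique index $k$ such that $t$ lies in the window whose right endpoint is the joint $\tau := (k+1)W_g$ (using the appropriate half-open convention for a frame arriving exactly at a joint). By the batching rule, this frame is placed into the category-$g$ job instance that is formed at $\tau$, and in particular that job instance is one of the periodic batched job instances referred to in the hypothesis.

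Next I would invoke the schedulability assumption on that job instance. Its release time is the joint $\tau$, and by construction its relative deadline is one window length $W_g$, so its absolute deadline is $\tau + W_g$. Schedulability of the job instance means it finishes executing no later than $\tau + W_g$, hence our frame's inference result is produced by time $\tau + W_g = (k+2)W_g$ at the latest. Equivalently, decomposing the latency as $l = l_{q_b} + l_{q_j} + l_e$ from the Execution Model: the batching wait $l_{q_b} \le \tau - t \le W_g$, and $l_{q_j} + l_e \le W_g$ by schedulability of the job instance. Either way, the total latency satisfies $l \le (k+2)W_g - t \le (k+2)W_g - kW_g = 2W_g$.

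Finally I would close by substituting the definition of $W_g$: since $2W_g = \min_{m'\in M_g} d^g_{m'} \le d^g_m$, the frame's latency is at most $d^g_m$. As the frame, the request, and the category were arbitrary, every frame of every request in $I$ meets its deadline, i.e., all requests in $I$ are schedulable. I do not expect a substantive obstacle: the content is entirely in the bookkeeping, and the only points needing care are (i) the half-open convention for a frame landing exactly on a joint, which does not affect the bound, and (ii) making explicit that the job instance absorbing the frame is exactly one of the periodic job instances the hypothesis quantifies over, so that the EDF-schedulability assumption can legitimately be applied to it. It is also worth remarking in the write-up that the factor $\tfrac{1}{2}$ is tight for this window scheme: a frame may wait almost a full window before being batched and its job may then take almost another full window to complete, so the $2W_g$ bound cannot be improved without extra assumptions.
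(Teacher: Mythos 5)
Your proof is correct and follows essentially the same route as the paper's: a frame waits at most one window length $W_g$ to be batched at the first joint it encounters, the resulting job instance must finish by the second joint, and $2W_g = \min_{m'\in M_g} d^g_{m'} \le d^g_m$ closes the argument. Your version merely makes the paper's ``two time window joints'' counting explicit as the arithmetic bound $l \le 2W_g$, and the added tightness remark is a nice but inessential observation.
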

\begin{proof}
 When the length of time windows of a request category is equal to a half of the smallest relative deadline in this request category, for all the  frames in all the requests of this  category, there are at least two time window joints between the arrival time and the deadline of a frame (inclusive). 
 This argument is illustrated in Figure \ref{fig:tw}, where the length of time windows is set to be smaller than half of the  smallest relative deadline.

All  frames of the same category that arrive at the same time window are batched at the end of the current time window , which is also the first time window joint they encounter. Since the batched job instance takes the next time window joint as its deadline, all  frames can be finished processing before the second time window joint if the batched job instance meets its deadline. Since there are at least two time window joints between the arrival time of a frame and the deadline of a frame, the deadline of a job instance will be earlier than the deadline of all its corresponding frames. Therefore, if the job instances are schedulable, so will be the frames.
\end{proof}

\vspace{-10pt}
\subsection{Job Instance Scheduling with EDF}

With the DisBatcher approach, we transform the problem of how to batch and schedule frames in real time  to the problem of scheduling a set of non-preemptive periodic task instances. The new task instances are not strictly the traditional non-preemptive periodic tasks, since job instances of the same task instance have variant execution times. 

The non-preemptive periodic tasks whose jobs have variant execution times are called non-preemptive multiframe tasks \cite{moyo2010schedulability}\cite{chakraborty2002schedulability}\cite{baruah2010preemptive}. Non-preemptive periodic tasks with fixed execution times are a special case of non-preemptive multiframe tasks. There are two types of scheduling algorithms for non-preemptive multiframe tasks --- algorithms that do not permit the processor to be idle when there are  jobs that have been released but have not completed execution  (non-idling), and algorithms that allow idle times (idling). For {\name}, we choose non-idling Earliest Deadline First (EDF) as its scheduling algorithm, the reasons of which are twofold. 

First,  the inserted idle times lead to a waste of precious GPU computation power, sacrificing the total throughput of the system. Second, although some idling algorithms can perform better than non-idling algorithms in terms of the number of schedulable tasks \cite{ekelin2006clairvoyant}, they often depend on complicated heuristics, and their performance gains  are only demonstrated through simulations. In fact, finding an optimal schedule in idling non-preemptive context is an NP-hard problem \cite{gary1979computers}. On the contrary, EDF has been shown to be an optimal scheduling algorithm for  non-preemptive multiframe tasks in non-idling context \cite{george1995optimality}\cite{baruah2006schedulability}\cite{chakraborty2002schedulability}.
  In Section \ref{sec:design}, we will discuss how {\name} performs admission control in order to make sure that all job instances are schedulable.

It is worth mentioning that the scheduling scheme of {\name} makes it easy to support non-real time requests. We treat non-real time requests in a similar way with real time requests. We also use DisBatcher to batch non-real time requests to transform the frames into  task instances. However, we don't batch non-real time requests together with real time requests for performance isolation;  we use a large time window for non-real time requests to  make sure they obtain a low deadline priority, and we impose them a large arrival period regardless of their true arrival periods so that they don't aggregate to large batches and cause too much priority inversion.

\vspace{-2mm}
\section{{{{\name}}} System Design}
\label{sec:design}

In this section we present the whole system of {\name}, shown in Figure \ref{fig:overview}. {\name} is a scheduling system built on top of the scheduling scheme presented in the previous section, aiming at providing a soft real time inference service for CNN models on GPU. {\name} consists of $5$ parts: a Performance Profiler, a two-phase Admission Control Module, a DisBatcher, an Execution Worker, and an Adaptation Module.

\ccomment{
First, {\name} has a Performance Profiler which performs offline performance analysis to obtain the execution times of batched job instances of different batch sizes and different categories. 
When a new request arrives at {\name}, {\name} first routes this pending request to a two-phase Admission Control Module where {\name} performs an admission control test to determine whether the pending request is  schedulable under current system workload, based on the data provided by the Performance Profiler. If the new request together with current workload is schedulable, {\name} begins the processing of this request by passing its frames to a DisBatcher, where the frames get   batched with frames from other requests into job instances and queued. Then, an EDF scheduler commands a Worker on GPU to start processing the job instances. We also have an Adaptation Module which monitors the performance of the Worker and adjusts the execution plan when necessary.
}

\vspace{-5pt}
\subsection{Performance Profiler}
\label{ssec:perf_prof}


Our Performance Profiler works offline. For each deep learning model that we want to execute at the edge server, for each frame shape that {\name} permits as legitimate shape, and for  different batch sizes, the Performance Profiler executes each  batch of frames on GPU multiple times, and records the execution time of each run. For each setting, we obtain a list of running times  and take the worst-case running time\footnote{In practice we take 99 percentile running time to filter out outliers.}.
In this way, we create a lookup table containing the execution times of different sized batches of frames with different shapes processed by different deep learning models. 
Whenever a new request comes to the system, we look up this table, find the corresponding  model and  shape, and feed the results to Admission Control Module to make admission decisions.

\vspace{-5pt}
\subsection{Admission Control Module}

When a new request arrives at {\name}, it is first routed to the Admission Control Module. Since we target  building a soft real time system, {\name} is selective with the requests in case  too many requests  cause serious  deadline misses. The Admission Control Module decides whether a pending request is admitted to {\name}.
As we discussed in Section \ref{sec:model}, {\name} uses DisBatcher to transform frames into task instances, which are intrinstically non-preemptive multiframe tasks.
Therefore, performing admission control for DisBatcher based {\name} is equivalent to performing admission control for non-preemptive multiframe tasks.

Some past works propose to perform admission control  for EDF under the non-preemptive  multiframe workload scenario using  demand-bound functions \cite{baruah2010non}\cite{baruah2010preemptive}\cite{baruah2006schedulability}\cite{chakraborty2002schedulability}. 
A demand bound function represents the
maximum execution demand of a task set in any time interval of a given length. 
Then this approach compares the demand bound with available resources to decide whether a task set is schedulable.
This approach suffers from pseudo-polynomial complexity and inaccuracy of their approximate algorithms in calculating the demand bound functions. Another approach performs simulation based feasibility analysis \cite{moyo2010schedulability}. Basically this approach represents time  with a clock variable.  When the clock reaches the arrival time of a job, the job is released to a deadline queue.
It simulates the execution of the job by simply incrementing the clock by the job's worst-case execution time. Then it compares the virtual completion time of the job, which is the current value of the clock variable, with the job's deadline, to determine whether there is a deadline miss. Since different tasks may have different initial release times, this approach uses a tree to represent all possible execution sequences, making its time complexity non polynomial. 

Since the goal of {\name} is to provide real time inference while maintaining high throughput, the Admission Control Module should admit as many requests as possible but not too many requests which overload the system. Therefore, it has to perform an exact analysis of schedulability.   {\name} adopts the simulation based approach as  it is an exact analysis. 
The time complexity of this approach can be  greatly reduced in {\name} to linear  with respect to the number of frames. The reason is that in {\name} the requests for  inference all have specific release times, because 
users' video frames occur at specific times instead of at arbitrary times, and the release times are communicated with the Admission Control Module. Since we also know the time when every time window starts, we can  know the release times of all job instances. In this way, we know exactly when and in what order each job instance ``arrives'' at the GPU, so we can build an exact execution schedule in linear time instead of building a tree of execution sequences. Note that in building such an execution schedule, {\name} requires the execution times of different job instances obtained in Section \ref{ssec:perf_prof}. The assumption for the exact schedulability analysis is thus accurate job instance execution time profiling.

To further improve the Admission Control Module by reducing its complexity, before we run the simulation based schedulability test, we first use a utilization based test to filter out obviously infeasible requests. The goal of the utilization based Phase 1 test  is to  reject a pending  request as fast as possible which will obviously cause deadline misses if accepted. The simulation based Phase 2 test is an exact test, which refines the results from Phase 1 test and ultimately decides whether the new request gets admitted. 

\textbf{Phase 1.} In Phase 1, {\name} uses the utilization of task instances to reject a pending request that  will obviously cause deadline violations. We calculate and evaluate the utilization of task instances because task instances are actually executed by the GPU .
We define the average utilization of a task instance $s$ to be 
\vspace{-5pt}
\begin{equation}
    U_s = \frac{\sum_{i=1}^{N_s}E_i}{N_sP_s},
\label{eq:ut}
\vspace{-3pt}
\end{equation}
where $N_s$ denotes the total number of job instances in task instance $s$, and $P_s$ is the period of $s$ which equals to the time window length used to generate $s$. Within each period there is at most one job instance. $E_i$ denotes the execution time of a job instance $i$ belonging to $s$. We further define the average utilization of a task instance set $\Sigma$ to be $U = \sum_{s\in \Sigma}U_s$. 
\ccomment{
\begin{equation}
    U = \sum_{s\in \Sigma}U_s ~.
\label{eq:u}
\end{equation}
} The task instance set comprises task instances generated from all existing requests and the new pending request. 
Naturally in order not to overload the system,  we should  have $U \leq 1$, or $\sum_{s\in \Sigma}\frac{\sum_{i=1}^{N_s}E_i}{N_sP_s} \leq 1$. The complexity of calculating $\sum_{i=1}^{N_s}E_i$ is linear with respect to $N_s$, since we need to estimate the number of  frames that fall inside each time window, and look up the profiling execution time table to decide $E_i$. 

In order to boost the speed of Phase 1, we make an approximation when calculating $U_s$. As the complexity of calculating $\sum_{i=1}^{N_s}E_i$ comes from the fact that we don't know how many  frames fall into each batch, we use an average number of  frames over all time windows to estimate the exact number of  frames in each time window. We denote the set of requests of the same category with $I^g$. The average number of  frames inside a time window of a request $I^g_m\in I^g$ is $\frac{W_g}{p^g_m}$, where $W_g$ is the period of the corresponding time window, and  $p^g_m$ is the period of  request $I^g_m$. Therefore, for all the requests of category $g$, the average number of  frames in one time window, denoted by $n_g$, can be represented by 
$
    n_g = \lfloor\sum_{I^g_m\in I^g}\frac{W_g}{p^g_m}\rfloor.
$ 
We look up the execution time $E^{n_g}$of a batch of $n_g$  frames from the table and we can get an estimation of $U_s$:
\vspace{-5pt}
\begin{equation}
\vspace{-5pt}
    \Tilde{U_s} = \frac{E^{n_g}}{P_s}.
    \vspace{-1pt}
\end{equation}

Please note that Phase 1  accepts more pending requests than {\name} can handle while not rejecting feasible requests by underestimating total workload. The reasons of this underestimation are twofold. First, it's  clear that the sufficient necessary condition for a set of periodic preemptive tasks to be schedulable is that the  total utilization is no larger than 1; but this is only a necessary condition for preemptive multiframe tasks \cite{mok1997multiframe}, let alone   non-preemptive multiframe tasks in our scenario. Second, we use an estimated average utilization of a task instance which doesn't consider  peak utilization. Also, we use a floor operator in calculating $n_g$ above. 
 In this way, Phase 1 admission control can give prompt responses to some clients and reduces some workload for Phase 2.

\textbf{Phase 2.} While {\name}'s Phase 1 test admits requests generously, the Phase 2 test does an exact schedulability analysis to control admission to the system. The Phase 2 test consists of three sequential steps --- system state recording, pseudo job instances generation, and an EDF imitator algorithm.

 \begin{algorithm}[t]
\SetAlgoLined
\KwIn{Sorted deadline queue $Q$, list of job instances $L$ ordered by release times}
\KwOut{Whether the jobs are schedulable with EDF}
 Initialization: $t \gets 0$\;
 
 \While{$Q$ not empty {\normalfont \textbf{or}} $L$ not empty}{
 \eIf{$Q$ is empty}{
   release job $i$ from $L$ to $Q$\;
   $t\gets R_i$\;
   }{pop job $k$ from $Q$\;
     $t\gets t+E_k$\;
     \If{$t>R_k+D_k$}{\Return not schedulable}
     \While{$L$ not empty {\normalfont \textbf{and}} $R_{L[1]}<t$}{
     release job $i$ from $L$ to $Q$\;}
   }}
   \Return schedulable
 
\label{al:edfimi}
 \caption{EDF imitator algorithm}
\end{algorithm}

In the first step, system state recording, the Admission Control Module captures the current system state, which includes four parts: (1) the number of frames of each category that have already arrived at {\name} and  wait to be batched by the DisBatcher,  (2) the already batched  job instances  in the  deadline queue waiting to be processed by GPU, (3) the periods of all time windows, and (4) the period and  number of remaining frames of each request. Essentially these four parts describe the existing system workload, and how to batch image frames to  job instances.

Once the Admission Control Module learns about the current system state, it proceeds to the second step, where it simulates the process of DisBatcher and generates pseudo job instances from all the requests, including the pending request being tested. 
It implements a virtual representation of the DisBatcher mechanism introduced in Section \ref{sec:model}, where the time and workload are both simulated. 
For each request, the Admission Control Module estimates the arrival time of each frame using its period, and compares the arrival time with the starting time and end time of the time windows to see which window each frame falls into. In this way the Admission Control Module is able to know the number of  frames to be batched in each window. The Admission Control Module looks up the execution time of each batched  job instance from the execution time table to get a list of virtual  job instances. This list contains the ``future'' job instances from all task instances, and the job instances are ranked according to their release times, which is the time when a job instance is pushed to the deadline queue. This is done through simultaneously running the aforementioned  DisBatcher simulator over all categories of requests and appending the  job instance with the smallest release time to the list each time.

With the system current state captured from Step 1 and the future virtual  job instances obtained from Step 2, the Admission Control Module moves to step three, where it  uses an EDF imitator algorithm to determine whether these job instances can be scheduled by EDF. The EDF imitator algorithm is shown in Algorithm $1$. In this algorithm, $Q$ is a sorted deadline queue storing    job instances, and $L$ is the list of job instances obtained from Step 2 which are released in the ``future''. Note that $L$ is already ordered by release times. We use $R_i$ to denote the release time of a job $i$, $E_i$ to denote the execution time of a job $i$, $D_i$ to denote the relative deadline of $i$, and $L[1]$ to denote the $1^{st}$ element of list $L$. The complexity of this algorithm is $O(N)$, where $N$ is the total number of frames.

\vspace{-12pt}
\subsection{DisBatcher Module and Execution Worker}

DisBatcher is the core component of {\name}. It is responsible for transforming  frames received from admitted user requests to job instances that are suitable to execute on GPU. It's an implementation of the batching approach presented in Section \ref{sec:model}.

\vspace{-0.1pt}
 Once a request is admitted by the Admission Control Module, the Admission Control Module sends request-related metadata including frame shape, requested model, period, and relative deadline to the DisBatcher. Then clients will directly send frames to DisBatcher. DisBatcher keeps track of all the admitted  requests in {\name}, together with their metadata.
DisBatcher manages a frame queue for each category of requests. These queues store  frames which arrive during their time windows and wait to be aggregated into batches of frames, or \emph{tensors}. DisBatcher utilizes reccurent countdown timers to implement time windows. It
 keeps a timer for each category of requests with the timer's countdown interval equal to the time window length. When a timer expires, DisBatcher batches all the  frames in its corresponding queue to a tensor and immediately starts timer countdown again. Whenever a new request is admitted, DisBatcher updates the countdown interval of the corresponding timer if the new request's relative deadline is smaller than the current smallest deadline.  

The DisBatcher wraps a tensor inside a new job instance, the relative deadline of which is equal to the corresponding time window length, and it pushes this  job instance onto an execution queue. A Worker on GPU subscribes to this execution queue and processes the job instances according to EDF. We implement the execution queue with a priority queue. The priority is determined by the job instances' absolute  deadlines (release time + relative deadline). 

The Worker is the execution engine which actually processes the batched job instances with a requested model on GPU.  It  repeatedly consumes the execution queue mentioned above whenever there are job instances inside. It processes the job instances one after another. The Worker is also responsible for monitoring the performance of execution. It  detects and records deadline misses. It also detects whether the execution time of a job instance is larger than the profiled worst-case execution time and report  overruns to the Adaptation Module.

We employ an optimization technique 
to further reduce frame latency. Occasionally GPU is idle while there are  frames which have already arrived but wait to be batched by DisBatcher. When {\name} detect this situation, it batches the frames before their timer expires and immediately sends the batched job instance to GPU for processing. In this way, {\name} can reduce the latency of these frames and  meanwhile increase the utilization of GPU.

\vspace{-5pt}
\subsection{Adaptation Module}

Since the hardware devices upon which {\name} operates are commercial off-the-shelf computing devices that don't have any hard deadline guarantees, {\name} is only able to provide soft real time services. The response time of processing the same request under the same setting may vary between different runs, occasionally leading to job overruns.  As the GPU executes jobs non-preemptively and we use EDF,  one single job overrun can possibly cause unpredictable deadline misses of many other jobs in the system. When overruns happen, we need a method to ``punish'' the overrun job, and more importantly, to avoid deadline misses as much as possible. 

In {\name}, each job instance category has a penalty initialized to be $0$. When the Worker observes that the execution time of a job instance exceeds the profiled execution time, the Adaptation Module will increase the penalty of the job instance category by the excess part. Meanwhile, the Adaptation Module informs the DisBatcher to decrease the shape (resolution)  of tensors belonging to that category. These tensors will not be batched with other tensors with the same smaller shape in order not to disturb job instances' priorities. The Worker will record the execution time of the new job instance and subtract the saved execution time from penalty. When penalty becomes non positive, the Adaptation Module commands the DisBatcher to resume the original shape of the overrun job instance and sets its penalty back to $0$. 

\ccomment{
\subsection{Supporting Non Real Time Requests}

{\name} also supports non real time requests, requests which don't require a latency guarantee. When non real time requests arrive, {\name} uses a polling server to transform them into periodically executed jobs. {\name} allocates a  portion of GPU utilization to these requests. The portion is by default set to be $20\%$ but it's configurable. }
\vspace{-2mm}
\vspace{-3pt}
\section{Implementation}
\label{sec:impl}



This section presents the implementation details of {\name}. All the scheduling actions of {\name} occur on CPU except the executions of CNN inference job instances. In order to make sure that {\name} gives prompt scheduling decisions, we assign the DisBatcher with the highest Linux user-space priority by setting its \texttt{nice} value.

We implement {\name} with the deep learning  framework Pytorch in Python. However, the mechanism of {\name} is completely framework agnostic. For communication between {\name} modules, we use a light-weighted messaging library ZeroMQ.

We use two edge devices equipped with GPUs to run and evaluate \name. The first device we use has a GeForce RTX 2080 Graphics Card with $2944$ CUDA cores. It has an Intel  i7-9700K 8-core CPU and $64$GB memory. The second device is an NVIDIA Jetson TX2 Developer Kit. Jetson TX2 is a computer specially designed to provide deep learning inference services on the edge. It is equipped with a GPU with $256$ CUDA cores. It has a 6-core CPU with $8$GB memory.

\vspace{-2mm}
\section{Evaluation}
\label{sec:eval}

We evaluate  {\name} by answering these questions:

\begin{itemize}
    \item How well does {\name} perform in terms of meeting deadline requirements as compared to state-of-the-art latency-centric CNN inference scheduling approaches?
    \item Is {\name} able to provide high throughput while guaranteeing soft real time services?
    \item How effective is the Admission Control Module in making schedulability decisions for new requests? What's the overhead of running this module?
    \item 
    How robust is {\name} against overruns and how quickly can {\name} bring the system back to normalcy? 

\end{itemize}

\vspace{-10pt}
\subsection{Experimental Dataset}

We use the DAVIS dataset \cite{perazzi2016benchmark} as the workload. This dataset consists of video frames of 480p ($480\times 854$)  and 1080p ($1080\times 1920$) resolution. We also downsample the video frames to various resolution formats in order to enrich the request data. On the desktop computer with an RTX 2080 card, we set the request video data to have 3 resolution formats: $1080\times 1920$, $480\times 854$, and $240\times 352$. On the Jetson TX2, as its computing power is smaller, we use frames of $360\times 640$, $240\times 352$, and $224\times 224$. All videos are colored videos and they all have the 3 RGB channels.
It is worth mentioning that {\name} is agnostic of video contents since different videos exhibit the same characteristics when being processed by classification models.

\vspace{-5pt}
\subsection{{\name} vs. Existing  Inference Systems }
\label{ssec:vs-state}

In this part, we compare the performance of {\name} and the state-of-the-art CNN inference systems with respect to their abilities to meet the latency requirements specified by user applications.

\textbf{Baseline.} We compare {\name} against these approaches which enable batching or adaptive batching to achieve low-latency high-throughput inference:

\begin{itemize}
    \item AIMD is an implementation of the dynamic batching scheme used by Clipper  \cite{crankshaw2017clipper} and MArk \cite{zhang2019mark}. As the name suggests, when inference latency does not exceed the latency objective, batch size increases additively. If latency objective is violated, a  multiplicative reduction of batch size is performed.
    \item BATCH is the scheme used by Triton Inference Server \cite{triton}. It performs batching over request data with a fixed batch size  determined empirically. We set the batch sizes as small as possible to reduce the latency of each job.
    \item BATCH-Delay is another scheme provided by Triton Inference Server. Apart from imposing a fixed number as the batch size, BATCH-Delay also imposes a time limit to each model. This scheme batches input data either when the number of frames in a batch reaches the configured batch size, or when the time limit is reached, whichever occurs first. 
\end{itemize}
It is worth mentioning that all of these approaches process multiple requests concurrently under multitenancy situation. For BATCH and BATCH-Delay, we set the  batch size as small as possible in order to reduce the latency of each job. However, sometimes a small batch size could drain GPU memory since there are too many jobs executed concurrently on GPU. If this happens, we increase the batch size until the memory problem is alleviated.

\textbf{Request traces.}  
Each time we run {\name} or the aforementioned inference systems, we feed the system with multiple synthesized requests to perform inference on their data. The requests are independent of each other and they arrive at the system one at a time. We use tweets traces from Twitter \cite{twitter} as a reference to determine the interval between the arrival of requests.
Each request contains a video with a fixed number of frames, and each frame  is released periodically according to its frame rate. In order to demonstrate the universality of {\name} on various kinds of applications, we randomly set the period and relative deadline of the frames in a video. The period and relative deadline of the frames are sampled from a Gamma distribution independently. We select Gamma distribution to generate random period and relative deadline settings because the generated random numbers by Gamma distribution start from $0$, and it's a common distribution in queuing theory.   The shape parameter $k$ and the scale parameter $\theta$ of the Gamma distribution are set to $2$ and $5$, respectively. Then we scale the samples to appropriate values. For each request, we randomly choose an input shape and a model from the $6$ models listed in Section \ref{sec:char} plus Mobilenet-v2 \cite{sandler2018mobilenetv2}, 
and we limit the number of categories of requests.

\begin{center}
\begin{table}
\footnotesize
\centering

\begin{tabular}{ c|c|c|c }
\hline
\hline
\multicolumn{4}{c}{Mean values of period and relative deadline (ms)} \\
\hline
   & Trace 1 & Trace 2 & Trace 3 \\ 
 \hline
Desktop & 50 & 150 & 250  \\ 
 \hline
 Jetson TX2 & 300 & 450 & 600\\
 \hline

\end{tabular}
\vspace{5pt}
\caption{Mean values of frame period and relative deadline when generating request traces.}
\label{tb:trace}
\vspace{-20pt}
\end{table}
\end{center}

\vspace{-15pt}
On both the desktop computer and Jetson TX2, we synthesize $3$ traces of requests using the aforementioned approach. Each trace contains $20$ to $30$ requests. The periods and relative deadlines of all requests in the same trace are obtained from scaling the randomly sampled values with the same factor. The mean values of periods and relative deadlines of these traces are shown in Table \ref{tb:trace}.

In each run, we feed an inference system with one trace of requests, wait till all frames are processed, and record frame deadline misses  as an indication of the system's ability to perform real time inference.  Since {\name}'s Admission Control Module admits requests selectively while other approaches don't have admission control measures, in order to guarantee the fairness of comparison, we record the accepted requests from {\name} and feed these requests to other systems. Moreover, we disable {\name}'s Adaptation Module which potentially reduces frame shapes.

\begin{figure*}
\vspace{-10pt}
  \centering
  \begin{subfigure}{0.45\textwidth}
  \centering
  \includegraphics[width = 0.65\linewidth]{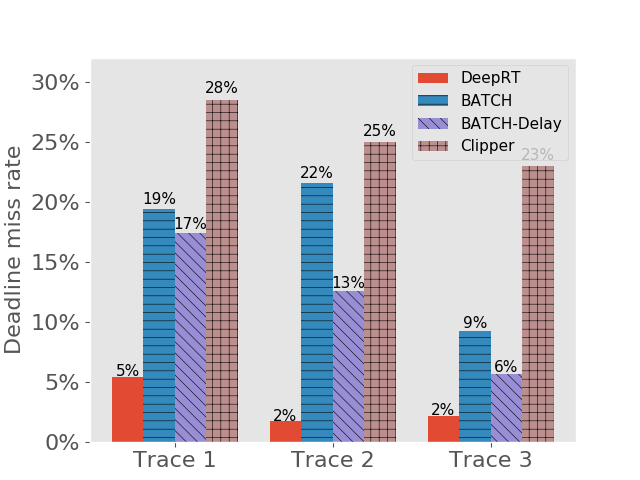}
  \caption{Desktop computer.}
  \label{fig:miss-rate-server}
  \end{subfigure}
  \begin{subfigure}{0.45\textwidth}
  \centering
  \includegraphics[width = 0.65\linewidth]{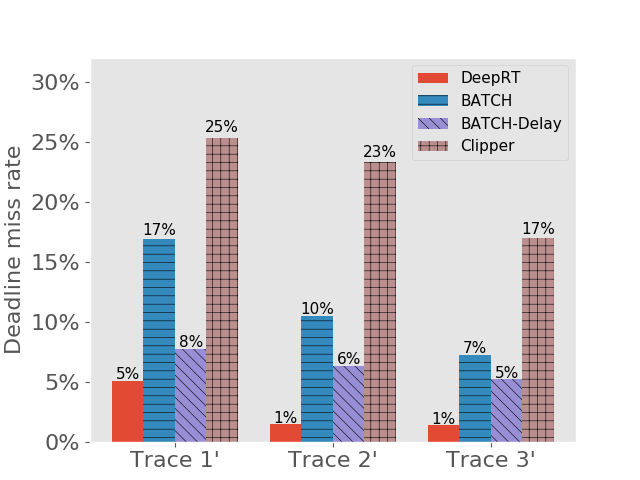}
  \caption{Jetson TX2.}
  \label{fig:miss-rate-jetson}
  \end{subfigure}
  \vspace{3pt}
  \caption{Comparison of deadline miss rates between {\name} and state-of-the-art inference systems on 3 synthesized request traces on the desktop computer and Jetson TX2.}
  \label{fig:miss-rate}
  \vspace{-13pt}
 \end{figure*}

 \begin{figure*}
  \centering
  \begin{subfigure}{0.32\textwidth}
  \includegraphics[width = 0.85\linewidth]{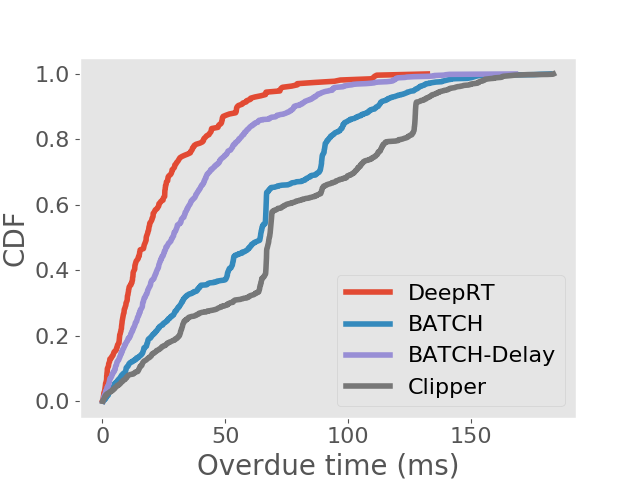}
  \caption{Trace 1 on desktop computer.}
  \label{fig:ddl-cdf-server-1}
  \end{subfigure}
  \begin{subfigure}{0.32\textwidth}
\includegraphics[width = 0.85\linewidth]{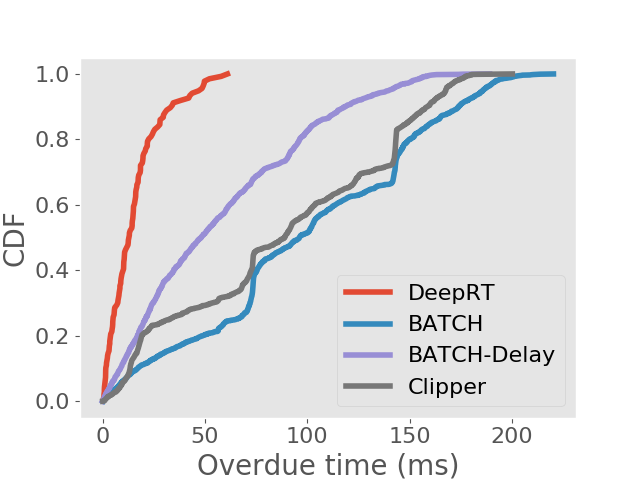}
  \caption{Trace 2 on desktop computer.}
  \label{fig:ddl-cdf-server-2}
  \end{subfigure}
\begin{subfigure}{0.32\textwidth}
  \includegraphics[width = 0.85\linewidth]{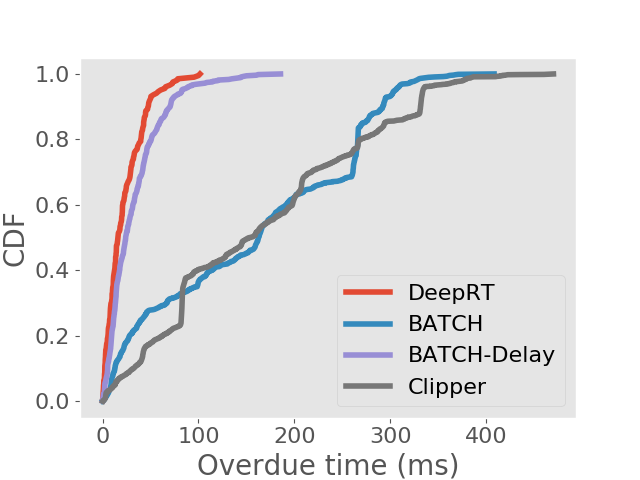}
  \caption{Trace 3 on desktop computer.}
  \label{fig:ddl-cdf-server-3}
  \end{subfigure}
  \begin{subfigure}{0.32\textwidth}
  \includegraphics[width = 0.85\linewidth]{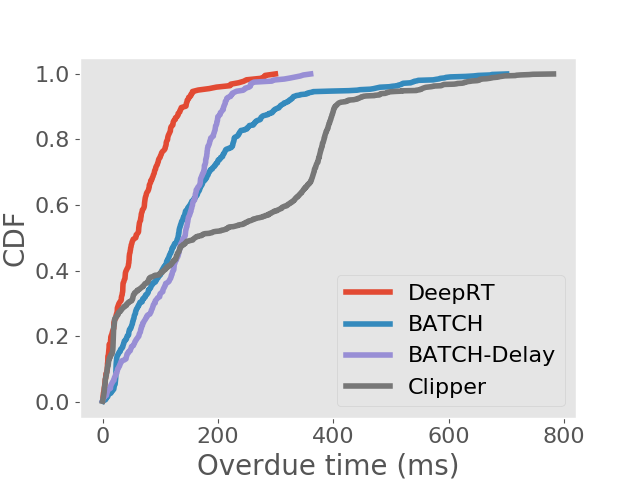}
  \caption{Trace 1' on Jetson TX2.}
  \label{fig:ddl-cdf-jetson-1}
  \end{subfigure}
  \begin{subfigure}{0.32\textwidth}
  \includegraphics[width = 0.85\linewidth]{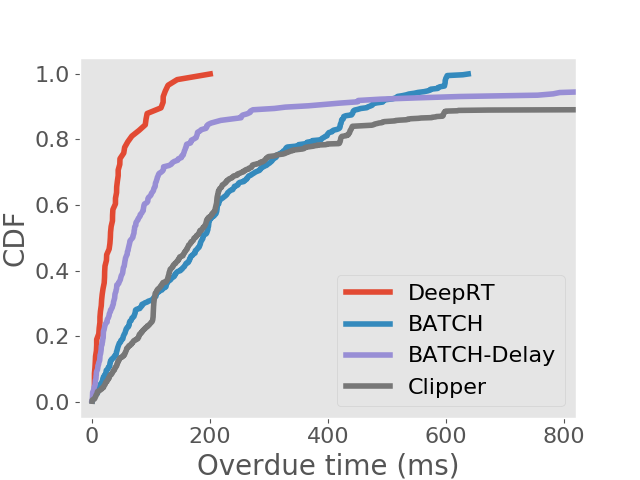}
  \caption{Trace 2' on Jetson TX2.}
  \label{fig:ddl-cdf-jetson-2}
  \end{subfigure}
  \begin{subfigure}{0.32\textwidth}
  \includegraphics[width = 0.85\linewidth]{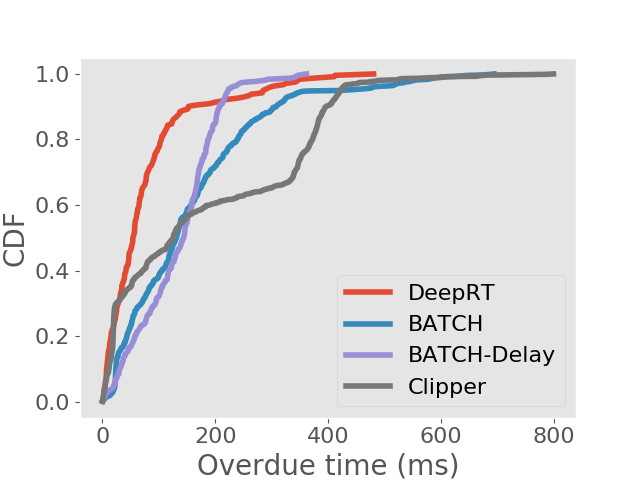}
  \caption{Trace 3' on Jetson TX2.}
  \label{fig:ddl-cdf-jetson-3}
  \end{subfigure}
  \vspace{3pt}
  \caption{CDF of overdue time under the synthesized request traces of Figure \ref{fig:miss-rate} on the desktop computer and Jetson TX2.}
  \label{fig:ddl-cdf}
  \vspace{-15pt}
 \end{figure*}

\textbf{Results on achieving real time inference.} We first show the deadline miss rates of {\name} and the other inference systems in Figure \ref{fig:miss-rate}. To distinguish  the traces on Jetson TX2 from the traces on desktop computer, we use ``Trace x$'$'' to indicate it is for Jetson TX2. We can see that for all $6$ traces {\name} shows the lowest deadline miss rates. When the mean values of period and relative deadline are $50ms$, the deadline miss rate of {\name} is still $5\%$ while it  handles $4$ concurrent requests and a total number of $10$ requests. {\name} exhibits the lowest deadline miss rates because its design focuses on meeting requests' deadlines and it considers special characteristics of GPU mentioned in Section \ref{sec:char}.  The results demonstrate {\name}'s ability to perform soft real time inference services for multiple requests. Note that {\name} does not completely avoid deadline misses due to job instance overruns.  Interestingly, Clipper shows the highest deadline miss rates in all runs. We think the reason is that the AIMD based adaptive batching scheme assumes abundant resources and is more suitable for cloud-scale inference.

\begin{figure}
  \centering
  \begin{subfigure}{0.23\textwidth}
  \includegraphics[width = 1.0\linewidth]{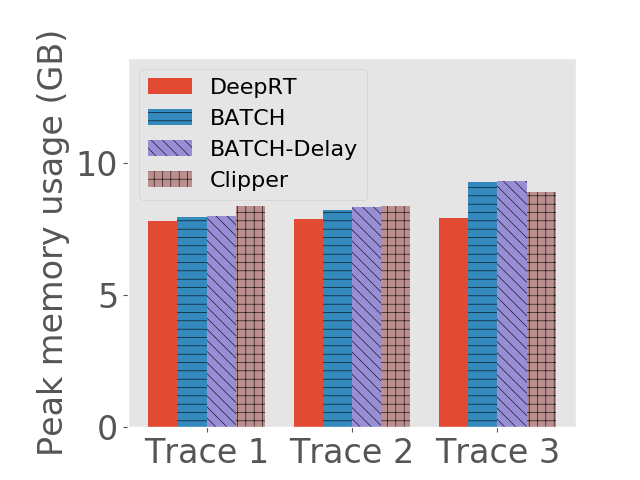}
  \caption{Desktop computer.}
  \label{fig:eval-memory-server}
  \end{subfigure}
  \begin{subfigure}{0.23\textwidth}
  \includegraphics[width = 1.0\linewidth]{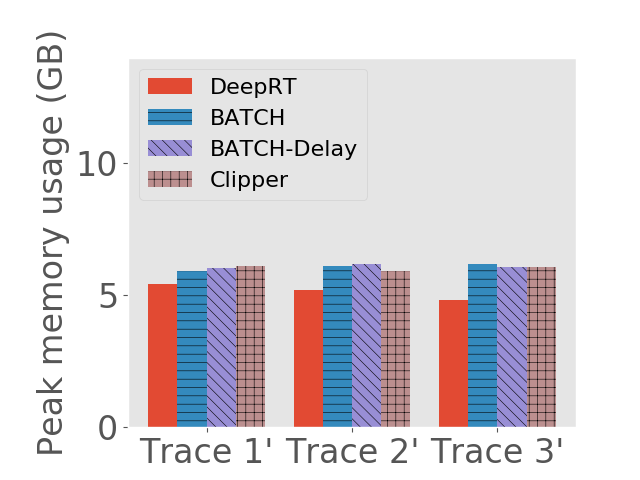}
  \caption{Jetson TX2.}
  \label{fig:eval-memory-jetson}
  \end{subfigure}
  \vspace{3pt}
  \caption{Peak memory usage of {\name} vs. state-of-the-art approaches under the request traces of Figure \ref{fig:miss-rate}.}
  \label{fig:eval-memory}
  \vspace{-15pt}
 \end{figure}
 
 \begin{figure*}
 \vspace{-10pt}
  \centering
  \begin{subfigure}{0.24\textwidth}
  \includegraphics[width = 0.95\linewidth]{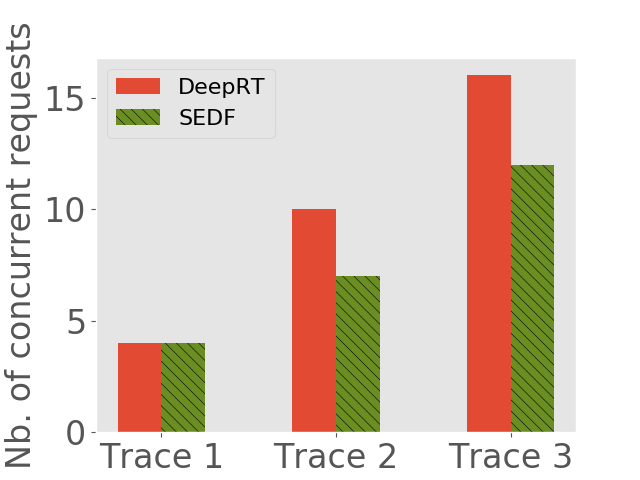}
  \caption{Number of admitted requests under 3 request traces on desktop computer.}
  \label{fig:req-server}
  \end{subfigure}
  \begin{subfigure}{0.24\textwidth}
  \includegraphics[width = 0.95\linewidth]{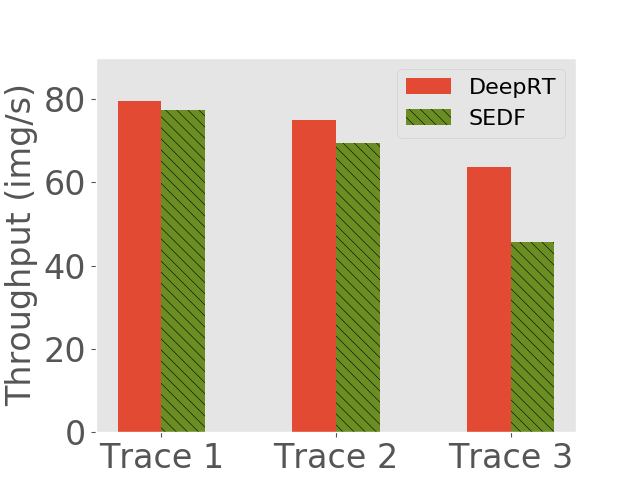}
  \caption{Throughput under 3 request traces on desktop computer.}
  \label{fig:thru-server}
  \end{subfigure}
    \begin{subfigure}{0.24\textwidth}
  \includegraphics[width = 0.95\linewidth]{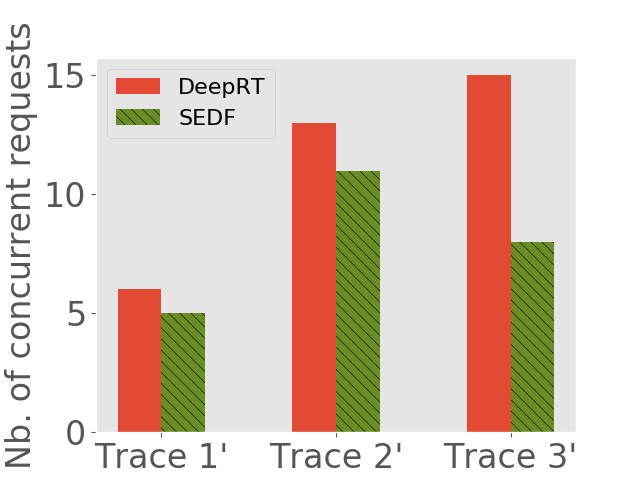}
  \caption{Number of admitted requests under 3 request traces on Jetson TX2.}
  \label{fig:req-jetson}
  \end{subfigure}
    \begin{subfigure}{0.24\textwidth}
  \includegraphics[width = 0.95\linewidth]{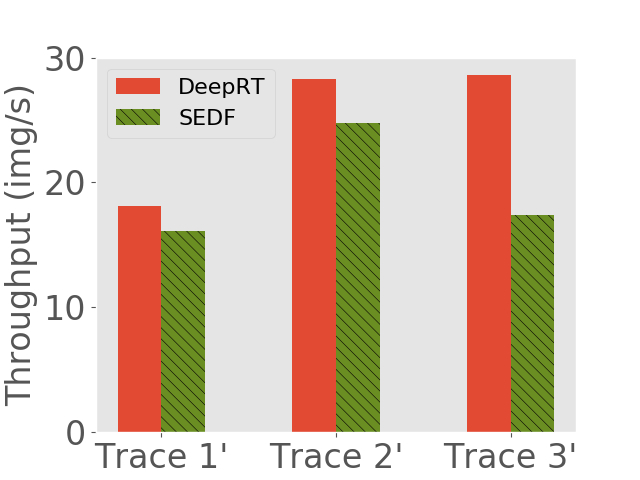}
  \caption{Throughput under 3 request traces on Jetson TX2.}
  \label{fig:thru-jetson}
  \end{subfigure}
  \vspace{3pt}
  \caption{Throughput comparison between {\name} and Sequential EDF on desktop computer and Jetson TX2.}
  \label{fig:throughput}
  \vspace{-15pt}
 \end{figure*}

We also analyse the frames that are finished processing after the deadlines. In soft real time systems, jobs whose deadlines are missed still provide some value if they can be completed as early as possible. We examine the distribution of overdue time for each inference approach and show the results in the form of CDFs in Figure \ref{fig:ddl-cdf}. We can see that {\name} performs the best in terms of deadline overdue time due to its utilization of the EDF algorithm.

\textbf{Peak GPU memory usage.} We also measure the peak GPU memory usage of the $4$ approaches under different traces. Peak GPU memory usage is an important metric since the computations on GPU are memory demanding. An inference system consuming too much GPU memory could drain the memory, leading to memory allocation errors. On the desktop computer we use \texttt{nvidia-smi} to measure GPU memory usage, and on Jetson TX2 we use \texttt{tegrastats}. The results are shown in Figure \ref{fig:eval-memory}.

\vspace{-10pt}
\subsection{Throughput performance of {\name}}

In this part we evaluate the throughput performance of {\name}. Since there is no existing real time scheduler for CNN inference on the edge, we implement a real time scheduler, Sequential EDF (SEDF), and compare the throughput performance of {\name} agaist SEDF. We would like to examine whether {\name} is able to offer high inference throughput while meeting latency requirements compared to SEDF. As its name suggests, SEDF processes input frames from multiple requests one by one according to the frames' deadlines. It doesn't execute multiple models concurrently, nor does it perform batching. We also implement an EDF imitator as the admission control policy of SEDF. It is worth mentioning that we don't compare {\name} with the baseline approaches in Section \ref{ssec:vs-state} since those approaches are not real time schedulers and therefore do not provide latency guarantees. Besides, they do not have admission control modules which reject requests if they can cause deadline misses. Therefore, we compare {\name} with a soft real time scheduler SEDF to guarantee a fair comparison.

We would like to see (1) how many concurrent requests  each inference system can handle and (2) what is the  throughput of each system. We use the same method as Subsection \ref{ssec:vs-state} to generate request traces, except that we increase the frequency of request arrival to saturate the inference systems. Another difference is that, we feed the two systems with the same pending requests, but we let each of them determine which requests to admit. The number of concurrent requests each approach can handle and the average throughput each approach achieves are shown in \ref{fig:throughput}.

We observe that on all the traces {\name} performs better than or as well as SEDF, due to the fact that the novel batching approach of {\name} leverages the batching ability of GPU and exhibits high throughput while providing latency guarantee. We can see that on the third traces of both devices, {\name} largely outperforms SEDF, while the differences on the first traces are smaller. As the mean  relative deadlines of the requests in the third traces are larger, more frames can be batched due to our DisBatcher design, boosting the throughput performance of {\name}. On the first traces, fewer frames are batched so {\name} doesn't have a high performance gain compared to SEDF.
 
 \vspace{-5pt}
 \subsection{Evaluating the Admission Control Module}
 
 In this part we evaluate the performance of the Admission Control Module. Specifically, we would like to examine (1) whether the Admission Control Module is able to accurately model the system to make admission decisions, and (2) what is the running time of the Admission Control Module.
 
\textbf{Accuracy of the EDF imitator.} Naturally, we would like the Admission Control Module to admit as many requests as possible  to increase  throughput while not violating any deadline requirements. That is the reason why we employ an EDF imitator as an exact analysis tool to determine schedulability. We evaluate how accurate the EDF imitator is in estimating future job instance executions. 

We generate $3$ traces of requests with the method  in Subsection \ref{ssec:vs-state}. The only difference lies in the mean values of periods and relative deadlines. For the first trace, we set the mean period to be $100ms$ and mean relative deadline to be $300ms$. For the second trace, we set both values to be $200ms$. And for the third trace we set them to be $300ms$ and $100ms$, respectively. The reason of using these configurations is that we would like to examine the EDF imitator under various batch sizes and various deadlines.
We only perform this experiment on the desktop computer as the effectiveness of the EDF imitator is the same across all platforms as long as the profiled worst-case job instance execution times are accurate.
\ccomment{
\begin{figure}
  \centering
  \includegraphics[width = 0.6\linewidth]{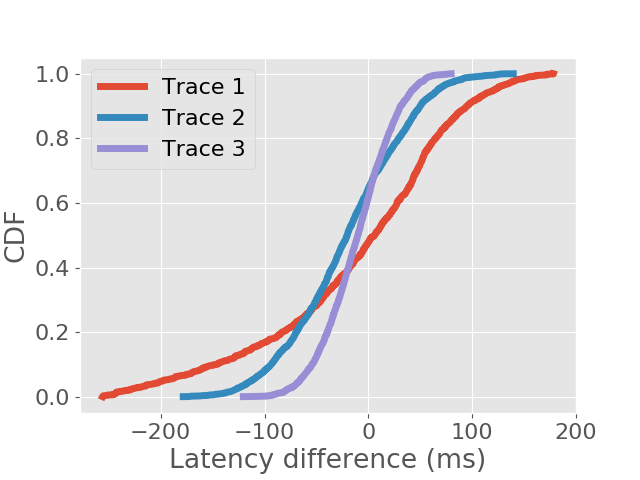}
  \vspace{3pt}
  \caption{CDF of the differences between predicted latencies by the EDF imitator and  latencies measured in real executions.}
  \label{fig:eval-accuracy}
  \vspace{-15pt}
 \end{figure}
 
 \begin{figure}
  \centering
  \includegraphics[width = 0.6\linewidth]{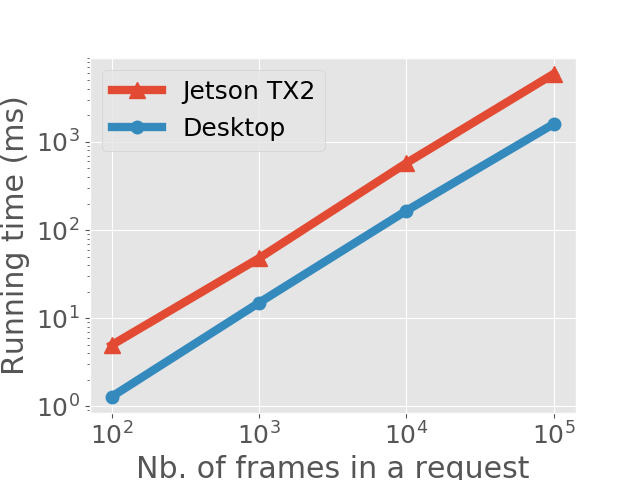}
  \vspace{3pt}
  \caption{Running time of the Admission Control Module when the requests contain different number of frames.}
  \label{fig:eval-run-time}
  \vspace{-15pt}
 \end{figure}
}

\begin{figure}
  \centering
  \begin{minipage}{.485\linewidth}
  \centering
  \includegraphics[width = 1.\linewidth]{figures/eval-accuracy.png}
  \caption{CDF of the differences  between predicted latencies by the EDF imitator and  latencies measured in real executions.}
  \label{fig:eval-accuracy}
  \end{minipage}%
  \hspace{4pt}
    \begin{minipage}{.485\linewidth}
  \centering
  \includegraphics[width = 1.\linewidth]{figures/eval-running-time.png}
  \caption{Median running time of the Admission Control Module when the requests contain different number of frames.}
  \label{fig:eval-run-time}
  \end{minipage}%
  \vspace{-20pt}
 \end{figure}

We use the  difference between the estimated latency of frame inference from the EDF imitator and the actual frame latency measured during real executions as the metric of accuracy, as the major goal of {\name} is to provide latency guarantee for users. The result is shown in Figure \ref{fig:eval-accuracy}. We can see that the difference is the smallest on the trace with the smallest deadline, and \emph{vice versa}. On the first trace which corresponds to the largest relative deadline ($300ms$), the difference can be as large as $250ms$. We find that the large latency differences happen on latter frames in a request frame sequence. In fact, when  the EDF imitator is performed on some requests, it considers all frames in the requests and latency estimation errors accumulate over the frame sequences. But large latency differences are rare and still smaller than the corresponding relative deadlines ($250ms<300ms$).  Overall the EDF imitator is sufficiently accurate to predict whether the deadline of a frame will be missed.

\textbf{Admission Control Module running time.} 
As mentioned in Section \ref{sec:design}, the complexity of the EDF imitator is linear with respect to the number of frames in all requests. We evaluate the running time of the Admission Control Module on both devices under different number of frames. Specifically, we generate $4$ request traces for both devices using the previous method; the requests in the $4$ traces contain videos with $10^2$, $10^3$, $10^4$, and $10^5$ frames, respectively.  The  running times (see Figure \ref{fig:eval-run-time}) are all below $1$ second except the case where Jetson TX2 processes requests with $10^5$ frames, where the running time  is $5.9$ seconds. If we consider the normal frame rate of a video to be $30fps$, $10^5$ frames correspond to a video of approximately one hour. In fact, if {\name} is used to perform inference on long videos, we can calculate the least common multiple of their periods and run the EDF imitator over twice that time period, significantly reducing the running time.

 \vspace{-5pt}
 \subsection{Adapt to Overruns}
 
 We evaluate how quickly {\name} reacts to job instance overruns and eliminate deadline misses caused by these overruns. We generate request traces with periods and relative deadlines to be $200ms$ for desktop and $600ms$ for Jetson TX2. In each run we manually inject a short waiting time to $5$ consecutive job instances and measure the number of deadline misses caused by the injected waiting time. If a certain method reacts to overruns faster, it can reduce the number of deadline misses.  We run this experiment on both the desktop computer and Jetson TX2. The lengths of the waiting times are set to be $100ms$, $200ms$, $500ms$, and $1000ms$.
 
 We compare the number of deadline misses between enabling and disabling the Adaptation Module in Figure \ref{fig:eval-adapt}. We can see that even without the Adaptation Module, {\name} is still able to bring the system back to normalcy after experiencing some deadline misses. The reason is that {\name} does not achieve $100\%$ utilization of the GPU as it is a  real time system. There is idle time between job instance executions which act as buffer absorbing the overruns. The Adaptation Module enhances the ability to absorb the overruns.

\vspace{-2mm}
\section{Related Work}
\label{sec:relatedwork}

\textbf{\ \ \  Deep Learning Inference on the Edge.}  
There has been a surge in industrial and research efforts to develop deep learning inference systems  on cloud or on edge. Tensorflow-Serving \cite{olston2017tensorflow} and Triton Inference Server \cite{triton} are two industrial general-purpose inference platforms. Clipper \cite{crankshaw2017clipper}  is a cloud based  throughput and latency oriented model serving system with a modular design. Mainstream \cite{jiang2018mainstream} enables work sharing among different vision applications to increase  throughput. In \cite{fang2017qos}, the authors propose to use deep reinforcement learning to adaptively select model and batch size to optimize QoS  defined as combination of accuracy and latency. Swayam \cite{gujarati2017swayam} is a cloud based machine learning serving system which autoscales resources  to meet SLA goals. In \cite{zhou2019adaptive}, the authors propose to partition CNN models across multiple IoT devices  to speed up the inference. Most of the above  efforts aim to improve throughput and latency performance, but either (1) they assume abundant cloud resources and achieve their performance goals through scaling the resources, or (2) they improve latency performance but don't have soft real time guarantee. DeepQuery \cite{fang2019serving} co-schedules real time and non-real time tasks on GPU, but its primary focus is to optimize the performance for non-real time tasks.

\begin{figure}
  \centering
  \begin{subfigure}{0.23\textwidth}
  \includegraphics[width = 1.0\linewidth]{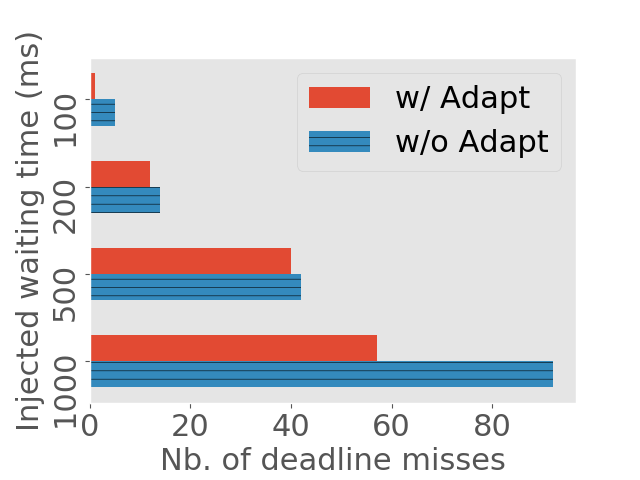}
  \caption{On desktop computer.}
  \label{fig:eval-adapt-server}
  \end{subfigure}
  \begin{subfigure}{0.23\textwidth}
  \includegraphics[width = 1.0\linewidth]{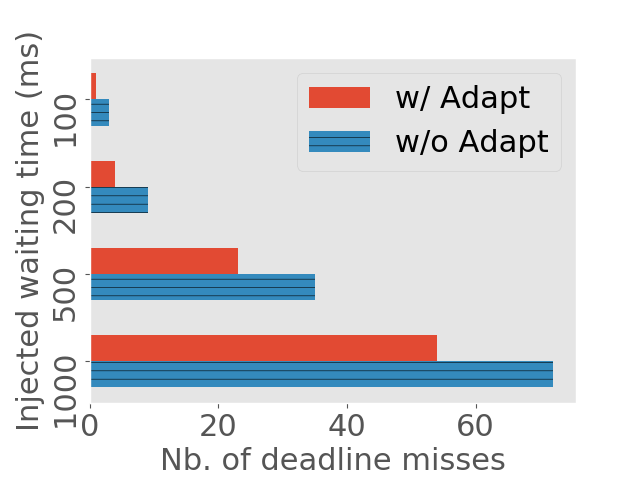}
  \caption{On Jetson TX2.}
  \label{fig:eval-adapt-jetson}
  \end{subfigure}
  \vspace{3pt}
  \caption{Comparison of the number of deadline misses caused by manually injected overruns between enabling and disabling the Adaptation Module.}
  \label{fig:eval-adapt}
  \vspace{-15pt}
 \end{figure}

Meanwhile a few works study the performance characteristics of deep learning inference on the edge. In \cite{hanhirova2018latency} the authors study latency and throughput performance of object recognition and object detection deep learning models. Their study focuses more on latency-throughput trade-off due to different batch sizes. In \cite{liang2020ai} the authors characterized different  AI applications using their specially built edge accelerators, and evaluate the benefit of model splitting.

\textbf{Augmenting GPU with Real Time Features.} 
Some researchers look deeper into how GPU works and try to add real time features to GPU based computations. 
In \cite{wang2017quality}, the authors propose to provide QoS support for GPU applications through fine-grained management of GPU resources such as registers, memory, and computation cycles. GPUSync \cite{elliott2013gpusync} is a real time management framework  supporting multiple scheduling policies such as rate-monotonic and EDF  using synchronization-based management. 
In \cite{liu2020removing}, the authors propose to separate CNN input data into different regions of importance and prioritize critical tasks by optimizing the importance of  regions. There have also been some works providing GPU with preemption ability by implementing GPU context switches \cite{tanasic2014enabling}\cite{park2015chimera}\cite{wang2016simultaneous}\cite{wu2017flep}. All these works differ from our work in that they provide real time features to GPU processing by manipulating lower level components such as GPU driver.

\textbf{Latency-centric IoT Data Processing.} 
Apart from processing computer vision application requests using GPU, researchers have proposed various  scheduling systems to perform traditional processing on video or IoT contents. 
In \cite{chu1999cpu}, the authors propose a CPU service class for multimedia real time processing, and put forward some scheduling  algorithms to process different service classes on CPU.  Janus \cite{rivas2010janus} provides a cross-layer CPU scheduling architecture for virtual machine monitors to schedule soft real time multimedia processing tasks. VideoStorm \cite{zhang2017live} has an offline profiler which generates videos' resource-quality profile, and uses this profile to jointly optimize processing quality and latency. Miras \cite{yang2019miras} proposes a reinforcement learning based scheduling scheme for scientific workflow data on cloud, minimizing average response time of the workflow requests.
\vspace{-2mm}
\section{Conclusion}
\label{sec:conc}

We present {\name}, a soft real time scheduler for performing CNN inference on the edge. {\name} consists of $5$ modules -- a Performance Profiler, an Admission Control Module, DisBatcher, an Execution Worker, and an Adaptation Module. {\name} uses time windows to  batch input data, the lengths of which are determined by the requests' deadlines, and processes the batched data sequentially. Our evaluation results show that {\name} is able to provide guarantee on inference latency while maintaining high inference throughput. 



%
\begin{acks}
This work is supported by the National Science Foundation under grant NSF 1827126.
\end{acks}

{\footnotesize \bibliographystyle{acm}}
\balance
{
\small
\bibliography{egbib}
}

\end{document}